\newtheorem{theorem}{Theorem}
\newtheorem{lemma}{Lemma}
\newtheorem{corollary}{Corollary}
\newtheorem{proposition}{Proposition}
\numberwithin{theorem}{section}
\numberwithin{lemma}{section}
\numberwithin{corollary}{theorem}
\numberwithin{proposition}{section}
\numberwithin{remark}{section}
\newcommand{\bs}[1]{\boldsymbol{#1}}
\newcommand{\supp}{{\rm supp}}
\newcommand{\dist}{{\rm dist}}
\title{{\Large {\bf Electric circuit induced by quantum walk }}
\author{
{\small Yusuke Higuchi\footnote{Current address: Department of Mathematics, Faculty of Sciences, Gakushuin University, Mejiro, Toshima-ku, Tokyo, Japan. Email: higuchi@cas.showa-u.ac.jp }}\\
{\scriptsize Mathematics Laboratories, College of Arts and Sciences, Showa University. }\\
{\scriptsize Fujiyoshida, Yamanashi 403-0005, Japan. }\\
{\small Mohamed Sabri\footnote{Email:
sabrimath@dc.tohoku.ac.jp}}\\
{\scriptsize Graduate School of Information Sciences, Tohoku University} \\
{\scriptsize Aoba, Sendai, 980-0845}\\
{\small Etsuo Segawa\footnote{Email: segawa-etsuo-tb@ynu.ac.jp}}\\
{\scriptsize Graduate School of Education Center, Yokohama National University} \\
{\scriptsize Graduate School of Environmental Sciences, Yokohama National University} \\
{\scriptsize Hodogaya, Yokohama 240-8501, Japan.}}
}
\date{\empty }
\begin{document}
\maketitle

\par\noindent
\begin{small}
\par\noindent
{\bf Abstract}. 
We consider the Szegedy walk on graphs adding infinite length tails to a finite internal graph. 
We assume that on these tails, the dynamics is given by the free quantum walk. 
We set the $\ell^\infty$-category initial state so that the internal graph receives time independent input from the tails, say $\bs{\alpha}_{in}$, at every time step. 
We show that the response of the Szegedy walk to the input, 
which is the output, say $\bs{\beta}_{out}$, from the internal graph to the tails in the long time limit, 
is drastically changed depending on the reversibility of the underlying random walk.  
If the underlying random walk is reversible, we have $\bs{\beta}_{out}=\mathrm{Sz}(\bs{m}_{\delta E})\bs{\alpha}_{in}$, 
where the unitary matrix $\mathrm{Sz}(\bs{m}_{\delta E})$ is the reflection matrix to the unit vector $\bs{m}_{\delta E}$ which is determined by the boundary of the internal graph $\delta E$. 
Then the global dynamics so that the internal graph is regarded as one vertex recovers the local dynamics of the Szegedy walk in the long time limit.  
Moreover if the underlying random walk of the Szegedy walk is reversible, 
then we obtain that the stationary state is expressed by a linear combination of the reversible measure 
and the electric current on the electric circuit determined by the internal graph and the random walk's reversible measure. 
On the other hand, if the underlying random walk is not reversible, then the unitary matrix is just a phase flip; 
that is, $\bs{\beta}_{out}=-\bs{\alpha}_{in}$, 
and the stationary state is similar to the current flow but satisfies a different type of the Kirchhoff laws.

\footnote[0]{
{\it Keywords: } 
Quantum walk, scattering, electric circuit, reversibility
}
\end{small}

\section{Introduction}
An irreducible random walk on a finite graph has a stationary state which is expressed by the eigenvector of the maximal eigenvalue $1$. 
On the other hand, for a quantum walk on a finite graph,  it is not so easy to obtain a stationarity of a quantum walk from a natural initial state. 
The difference between them arises from the following; 
the time evolution operator of a quantum walk is described by some unitary operator on Hilbert space and then its 
spectrum lies on the unit circle in the complex plain. 
However if the graph is infinite, e.g., $d$-dimensional lattice, regular trees, 
then some kinds of convergence theorems of quantum walks have been obtained~\cite{KonnoBook}. 
In \cite{FH1,FH2,HS}, to obtain a stationary state of discrete-time quantum walk, 
quantum walks on the following semi-infinite graph are considered. 
Half lines called ``tails" here, on which the quantum walk is free, are joined to a finite graph named internal graph. 
We set the initial state so that walkers are inputted from these tails to the graph eternally. 
Then the internal graph receives the inputs from the outside at every time step 
while once a quantum walker goes out from the internal graph, she never comes back to the internal graph since the dynamics on tails is free. 
By the balance of these inputs and outputs, we obtain a fixed point of this dynamical system~\cite{HS}. 
The corresponding continuous-time quantum walks have been investigated, in for examples, \cite{FaGoGu,FaGu}. 

In this paper, we further develop \cite{HS} by generalizing Grover walk to the Szegedy walk~\cite{Sze} induced by a random walk on this tailed graph 
to see more fundamental structures of the stationary state.

We find the property of the stationary state is drastically changed depending on the reversibility of the underlying random walk. 
If the underlying random walk is reversible, 
we show that the scattering matrix on the surface of the internal graph is described by the Szegedy matrix determined by only its boundary.
The Szegedy matrix, which is the local quantum coin, determines the local dynamics at each time step and at each vertex (see Section~2 for more detail); it can also describe such global dynamics. 
On the other hand, if the underlying random walk is non-reversible, the scattering matrix on the surface is described by a diagonal matrix. 
Therefore we observe the scattering of a quantum walker towards only the input tails in the long time limit. 
Moreover in the interior of the internal graph 
the stationary state is a convex combination of an electric current and the reversible measure if the underlying random walk is reversible, 
while the stationary state is similar to an electric current but it satisfies a different type of Kirchhoff's law if the underlying random walk is non-reversible. 

Through a simple example, we can observe our model behaves like an $LC$ circuit, in the sense that, as the underlying non-reversible random walk is closer to the reversible one, the total energy of the internal graph in the stationary state increases to infinity~\cite{FW}. See Section~5 for more detail. It is well known that the resonant frequency is normally characterized by the inductance and the capacitance in the circuit; the reversibility in  our setting may be said to be ``resonant frequency", but if the underlying random walk just becomes reversible, the total energy is changed to be finite.  

This paper is organized as follows. 
In Section~2,  the settings of graph and our quantum walk in this paper are prepared. 
In Section~3, we show our main results for the reversible and non-reversible cases, respectively. 
In Section~4, we give the proofs of the theorems. 
Finally Section~5 is devoted to the summary and discussion. 
We show an example to clearly see a difference of stationary states of the quantum walks for the reversible and non-reversible cases,  
which give a motivation for further study on this quantum walk induced by a non-reversible random walk, whose stationary state seems not to be described by a classical dynamics, as a future's problem. 
\section{The setting}
\subsection{Graph setting}
Let $G=(V,A)$ is a symmetric directed graph such that $a\in A$ if and only if  $\bar{a}\in A$, where $\bar{a}$ is the inverse arc of $a$. 
The terminus and the origin vertices of $a\in A$ are denoted by $t(a)$ and $o(a)$, respectively. 
Consider an infinite graph $\tilde{G}=(\tilde{V},\tilde{A})$ constructed of a finite graph $G_0=(V_0,A_0)$ and $r$-tails $\mathbb{P}_1,\dots,\mathbb{P}_r:$ 
	\[ \tilde{G}=G_0 \cup \bigcup_{j=1}^{r}\mathbb{P}_j \mathrm{\;with\;} V(\mathbb{P}_j)\cap V_0 =\{o(\mathbb{P}_j)\}  \]
for $j=1,\dots,r$.
Here each $\mathbb{P}_j$ is the semi-infinite path whose end vertex is $o(\mathbb{P}_j)$. 
Let $\delta A= \{e_1,\dots,e_r\} \subset \cup_{j=1}^r A(\mathbb{P}_j)$ such that $t(e_j)=o(\mathbb{P}_j)$ for $j=1,\dots,r$. 
Assume that the amplitude of the inflow along $e_j$ is $\alpha_j\in \mathbb{C}$. 
\subsection{Underlying random walk and induced Szegedy walk setting}
Throughout this paper, for a discrete set $\Omega$, $\mathbb{C}^\Omega$ is the vector space whose basis are labeled by $\Omega$. 
We set $p: \tilde{A}\to (0,1]$ such that 
	\begin{align*}
        &\sum_{o(a)=u}p(a) = 1,\;\;(u\in \tilde{V}); \\
        &p(a) = 1/2,\;\;(a\in \cup_{j=1}^r A(\mathbb{P}_j)\setminus \{\bar{e}\;;\; e\in\delta A\}). 
        \end{align*}
The probability transition operator $P: \mathbb{C}^{\tilde{V}}\to \mathbb{C}^{\tilde{V}}$ is defined by 
	\[ (Pf)(u)=\sum_{a\in \tilde{A}: t(a)=u} p(a)f(o(a)) \]
for every $u\in \tilde{V}$. 
The cut off of the probability transition operator $P$ with respect to $\delta V=\{o(\mathbb{P}_1),\dots,o(\mathbb{P}_r)\}$; 
$P': \mathbb{C}^{V_0}\to \mathbb{C}^{V_0}$, is denoted by 
	\[ (P'g)(u)=\sum_{a\in A_0: t(a)=u} p(a)f(o(a))  \]
for every $u\in V_0$.         
Let $\tilde{E}$ be the set of unoriented edges of $\tilde{G}$, that is, $|a|=|\bar{a}|\in \tilde{E}$ if and only if $a,\bar{a}\in \tilde{A}$. 
Remark that if the random walk is reversible, that is, there exists $m_V\in \mathbb{R}^{\tilde{V}}\setminus\{\bs{0}\}$ and $m_E\in \mathbb{R}^{\tilde{E}}$, 
such that 
	\begin{equation}\label{eq:DBC} 
	p(a)m_V(o(a))=p(\bar{a})m_V(t(a))=m_{E}(|a|), 
	\end{equation}
then since $p(a)=m_{E}(|a|)/m_V(u)$ for any $a$ with $o(a)=u$ and $m_V(u)=\sum_{o(a)=u}m_E(|a|)$, we have 
	\begin{equation}\label{eq:revRW} 
        [\sqrt{p(a_1)},\dots,\sqrt{p(a_d)}]^\top
        \\=[\sqrt{m_E(|a_1)|/m_V(u)},\dots,\sqrt{m(|a_d|)/m_V(u)}]^\top.  
        \end{equation}
In this paper, we consider the induced quantum walk by this random walk for 
both reversible and non-reversible cases. 

The definition of the induced quantum walk called the Szegedy walk~(cf \cite{Sze}) is as follows. 
The total space of our quantum walk is denoted by $\mathbb{C}^{\tilde{A}}$. 
The time evolution operator of the Szegedy walk on $\mathbb{C}^{\tilde{A}}$ is defined by 
	\[ (U\Psi)(a)=\sum_{b\in\tilde{A}:t(b)=o(a)} (2\sqrt{p(\bar{a})p(b)}-\delta_{\bar{a}}(b))\Psi(b),  \]
for any $\Psi\in \mathbb{C}^{\tilde{A}}$. 
Let $\Psi_n$ be the $n$-th iteration of $U$; that is, $\Psi_{n+1}=U\Psi_n$. 
Put $\{a_1,\dots,a_d\}=\{a\in\tilde{A} \;|\; o(a)=u\}$ for $u\in \tilde{V}$. 
Then we have 
	\begin{equation}\label{eq:localdyamics} 
        \begin{bmatrix} \Psi_{n+1}(a_1) \\ \vdots \\ \Psi_{n+1}(a_d) \end{bmatrix}
        	= \mathrm{Sz}([\sqrt{p(a_1)},\dots,\sqrt{p(a_d)}]^\top) \begin{bmatrix} \Psi_{n}(\bar{a}_1) \\ \vdots \\ \Psi_{n}(\bar{a}_d) \end{bmatrix}. 
        \end{equation}
Here for a unit vector $\bs{u}\in \mathbb{C}^d$, the Szegedy matrix $\mathrm{Sz}(\bs{u})$ is denoted by 
	\[ \mathrm{Sz}(\bs{u})=2 \bs{u}\bs{u}^* -I_{\mathbb{C}^d}, \]
that is, $(\mathrm{Sz}(\bs{u}))_{ij}=2\bs{u}(i)\overline{\bs{u}(j)}-\delta_{ij}$. 
The initial state considered here is give as follows
    \[ \Psi_0(a)=
    \begin{cases}
    \alpha_1 & \text{: $a\in A(\mathbb{P}_1),\;\dist(o(\mathbb{P}_1),t(a))<\dist(o(\mathbb{P}_1),o(a))$,} \\
    \vdots \\
    \alpha_r & \text{: $a\in A(\mathbb{P}_r),\;\dist(o(\mathbb{P}_r),t(a))<\dist(o(\mathbb{P}_r),o(a))$,} \\
    0 & \text{: otherwise.}
    \end{cases} \]
Here $\alpha_1,\dots,\alpha_r\in \mathbb{C}$. 
Note that $\Psi_0$ belongs to no longer the $\ell^2$ category. 
This initial state can be interpreted that we insert the same inflow into the internal graph at the every time step. 
Remark that if a quantum walker goes out to the tails, then it never comes back again to the internal graph due to the free-dynamics on tails; 
such a walker can be regareded as an outflow. 
Then we expect a balance between the inflow and the outflow of the internal graph in the long time limit.
Indeed, we have already obtained the existence of the stationary state of this dynamics in the previous works as follows.  
\begin{theorem}{\rm \cite{HS}}
The stationary state $\Psi_\infty$ uniquely exists, that is, $\exists \lim_{n\to\infty}\Psi_n(a)=\Psi_\infty(a)$. 
The stationary state satisfies 
$(U\Psi_\infty)(a)=\Psi_\infty(a)$. 
\end{theorem}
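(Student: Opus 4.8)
The plan is to pass to a generating function in a complex variable $z$, which converts the infinite‑volume dynamics into a finite linear system, and then to extract both the stationary state and the convergence from the analytic behaviour of that system near $|z|=1$. Set $\hat{\Psi}_z=\sum_{n\ge 0}z^n\Psi_n$; since each $\Psi_n(a)$ is obtained from $\Psi_0$ by $n$ applications of $U$ and $\tilde{G}$ has bounded degree, $\sup_{a}|\Psi_n(a)|$ grows at most exponentially in $n$, so $\hat{\Psi}_z$ converges coordinatewise for $|z|$ small, and $\Psi_{n+1}=U\Psi_n$ gives $(I-zU)\hat{\Psi}_z=\Psi_0$. First I would solve this on the tails: on $\mathbb{P}_j$ the free dynamics is a pair of one‑sided shifts (the component pointing toward $G_0$ moves one step toward $G_0$ at each tick, the component pointing away moves one step outward), so the generating function of the inward amplitude at $e_j$ is the geometric series $\alpha_j/(1-z)$ produced by the constant inflow, while the generating function of the outward amplitude at distance $k$ from $G_0$ is $z^k$ times that of the outward amplitude at $o(\mathbb{P}_j)$. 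Substituting these into the evolution at the vertices $o(\mathbb{P}_1),\dots,o(\mathbb{P}_r)$ reduces $(I-zU)\hat{\Psi}_z=\Psi_0$ to a finite system $M(z)\,\xi(z)=\frac{1}{1-z}\,\eta$, where $\xi(z)$ lists the entries of $\hat{\Psi}_z$ on the arcs of $A_0\cup\delta A\cup\overline{\delta A}$, $M(z)$ is an explicit matrix‑valued holomorphic function on a neighborhood of $\{|z|\le 1\}$, and $\eta$ is a fixed vector determined by $\alpha_1,\dots,\alpha_r$; the values of $\hat{\Psi}_z$ on the remaining tail arcs are then holomorphic functions of $z$ and of $\xi(z)$.

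Next I would show that $\hat{\Psi}_z$, a priori defined only for small $|z|$, continues to a meromorphic $\mathbb{C}^{\tilde{A}}$‑valued function on a neighborhood of the closed unit disk whose only pole is a simple pole at $z=1$. The zeros of $\det M(z)$ on $\{|z|=1\}$ other than $z=1$ are exactly the eigenvalues of $U$ that carry an $\ell^2$ eigenfunction; since the free quantum walk on a half‑line has no $\ell^2$ eigenfunction, every such eigenfunction vanishes identically along the tails, hence is supported on $A_0$ and is $\ell^2$‑orthogonal to $\Psi_0$ — and this orthogonality is precisely the Fredholm compatibility that keeps $M(z)^{-1}\bigl(\frac{\eta}{1-z}\bigr)$ bounded across those points. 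At $z=1$ one must instead check that the pole has order one, i.e. that $\Psi_n$ carries no linearly growing component; equivalently, that $\eta$ is compatible with $\ker M(1)$ (where $M(1)$ may be singular both because of the stationary mode and because of a possible eigenvalue‑one bound state). Granting this, the residue analysis gives a well‑defined limit $\Psi_\infty:=\lim_{z\to 1}(1-z)\hat{\Psi}_z\in\mathbb{C}^{\tilde{A}}$: on $A_0\cup\delta A\cup\overline{\delta A}$ it is the coordinatewise limit of $M(z)^{-1}\eta$, and on each tail it is the induced constant, equal to $\alpha_j$ inward and to some outward value $\beta_j$.

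It remains to read off the theorem. Multiplying $(I-zU)\hat{\Psi}_z=\Psi_0$ by $(1-z)$ and letting $z\to1$ gives $(I-U)\Psi_\infty=0$, i.e. $U\Psi_\infty=\Psi_\infty$. For the convergence, note that $z\mapsto\hat{\Psi}_z(a)-\frac{\Psi_\infty(a)}{1-z}$ is holomorphic on a disk of radius $>1$, so its Taylor coefficients, which are exactly $\Psi_n(a)-\Psi_\infty(a)$, decay geometrically; in particular $\lim_{n\to\infty}\Psi_n(a)=\Psi_\infty(a)$. Uniqueness is then automatic, the limit of a sequence being unique; equivalently, two fixed points compatible with the boundary data differ by an $\ell^2$ eigenfunction of $U$ at eigenvalue $1$, which as above lives on $A_0$ and is $\ell^2$‑orthogonal to every $\Psi_n$, hence to $\Psi_\infty$.

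\textbf{The main obstacle} is the continuation step of the second paragraph: showing that the singularities of $\hat{\Psi}_z$ on $\{|z|=1\}$ amount only to a simple pole at $z=1$. Excluding a double pole at $z=1$, and verifying that the remaining circle zeros of $\det M(z)$ are annihilated by the source vector, genuinely uses the fact that a quantum walker leaving $G_0$ never returns — the ``outgoing'' boundary structure of $U$ on the tails — and is the part least reducible to routine computation; the rest is shifts on half‑lines together with finite‑dimensional linear algebra.
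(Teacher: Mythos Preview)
The paper does not prove this statement; it is quoted from \cite{HS}, and the argument there (as the paper's Section~2.3 and Section~4 make clear) runs through the finite recursion $\psi_{n+1}=E_{PON}\psi_n+\rho$ on $A_0$, where $E_{PON}=\chi U\chi^*$ is a compression of a unitary: one writes $\psi_n=\sum_{k<n}E_{PON}^k\rho$ and shows that every eigenvector of $E_{PON}$ with unit-modulus eigenvalue lifts to a bound state of $U$ supported in $A_0$ and is therefore orthogonal to $\rho$, so the partial sums converge on the orthogonal complement of that ``centered'' eigenspace. Your generating-function route is a legitimate alternative and, after the same reduction, is the $z$-transform of that recursion, $\sum_{n\ge0} z^n\psi_n=\tfrac{z}{1-z}(I-zE_{PON})^{-1}\rho$; your matrix $M(z)$ is essentially $I-zE_{PON}$, and your ``main obstacle'' --- removability of the unit-circle singularities away from $1$ and simplicity of the pole at $z=1$ --- is exactly the orthogonality just described. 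Your approach buys an explicit exponential rate of convergence from holomorphy past $|z|=1$, at the cost of analytic-continuation bookkeeping that the direct Neumann-series argument avoids. One refinement worth making: the Fredholm compatibility you actually need is $v\perp\rho$ in $\mathbb{C}^{A_0}$, not $v\perp\Psi_0$ in $\mathbb{C}^{\tilde A}$; the two are linked by $\langle v,\rho\rangle=\langle\chi^*v,U\Psi_0\rangle=\bar\lambda\langle\chi^*v,\Psi_0\rangle=0$, but the middle equality requires that $\chi^*v$ be an eigenvector of $U$ itself (not merely of its compression $E_{PON}$), and that step --- which uses unitarity of $U$ together with $|\lambda|=1$ --- should be stated rather than absorbed into ``orthogonality to $\Psi_0$''.
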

\subsection{The induced dynamical system}
Let $\chi: \mathbb{C}^{\tilde{A}}\to \mathbb{C}^{A_0}$ be the boundary operator of $A_0$ such that for any $\Psi\in \mathbb{C}^{\tilde{A}}$, 
$(\chi\Psi)(a)=\Psi(a)$ $(a\in A_0)$. 
The adjoint $\chi^*: \mathbb{C}^{A_0}\to \mathbb{C}^{\tilde{A}}$ is described by 
	\[ (\chi^* \psi)(a) = \begin{cases} \psi(a) & \text{: $a\in A_0$,} \\ 0 & \text{: otherwise.} \end{cases} \]
Remark that $\chi\chi^*: \mathbb{C}^{A_0}\to \mathbb{C}^{A_0}$ is the identity operator of $\mathbb{C}^{A_0}$ 
and $\chi^*\chi: \mathbb{C}^{\tilde{A}}\to \mathbb{C}^{\tilde{A}}$ is the projection operator with respect to $A_0$. 
Let $E_{PON}$ be the submatrix of the whole unitary time evolution operator $U$ restricted to $A_0$,  
that is, $E_{PON}=\chi U\chi^*$. 
Putting $\psi_n:=\chi \Psi_n$, we have 
	\begin{align*}
        \psi_n &= \chi \Psi_n=\chi U\Psi_{n-1}\\
               &=\chi U \chi^*\chi \Psi_{n-1}+\chi U(1-\chi^*\chi) \Psi_{n-1}\\
               &= E_{PON}\psi_{n-1}+\chi U\Psi_0
        \end{align*}
for $n\geq 1$. 
Therefore in the internal graph, the dynamics is described by 
	\begin{equation}\label{eq:master_eq} 
        \psi_0=0,\;\; \psi_{n+1}=E_{PON}\psi_n+\rho, 
        \end{equation}
where $\rho$ is the ``input" defined by $\rho:=\chi U\Psi_0$. This is the induced dynamical system. 
\section{Main results}
First we show that 
if the underlying random walk is reversible, 
then the local dynamics of the Szegedy walk denoted by (\ref{eq:localdyamics}) 
is reproduced again in the global dynamics in the long time limit as follows. 
\begin{theorem}\label{thm:scattering}
Let $r$ be the number of tails. 
Assume the underlying random walk is reversible (\ref{eq:DBC}). 
The input and output are denoted by $\bs{\alpha}_{in}=[\Psi_\infty(e_1),\dots,\Psi_\infty(e_r)]^\top\in \mathbb{C}^r$ and 
$\bs{\beta}_{out}=[\Psi_\infty(\bar{e}_1),\dots,\Psi_\infty(\bar{e}_r)]^\top \in \mathbb{C}^r$, respectively. 
Set a unit vector on $\mathbb{C}^r$ by 
	\[\bs{m}_{\delta E}=[ \sqrt{m_E(|e_1|)/m(\delta G_0)}, \dots , \sqrt{m_E(|e_r|)/m(\delta G_0)}]^\top,\] 
where $m(\delta G_0)=\sum_{j=1}^r m_E(|e_j|)$. 
Then we have 
	\[ \bs{\beta}_{out} =  \mathrm{Sz}(\bs{m}_{\delta E}) \; \bs{\alpha}_{in}. \]
\end{theorem}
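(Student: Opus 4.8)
The plan is to exploit the stationarity equation $U\Psi_\infty = \Psi_\infty$ together with the explicit local dynamics \eqref{eq:localdyamics} at the boundary vertices $o(\mathbb{P}_j)$, and to combine this with the constraint imposed by the free dynamics on the tails. First I would record what stationarity forces on the tails: since $p(a)=1/2$ on each tail edge away from $\delta A$, the local coin there is the $2\times 2$ Szegedy matrix $\mathrm{Sz}((1/\sqrt2,1/\sqrt2)^\top)$, which is the flip $\begin{bmatrix}0&1\\1&0\end{bmatrix}$; a stationary state that is a convergent limit of the $\Psi_n$ must therefore be constant along each tail in the outgoing direction and constant (equal to $\alpha_j$) in the incoming direction. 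Hence $\Psi_\infty(e_j)=\alpha_j$ and $\Psi_\infty(\bar e_j)=:\beta_j$ are well defined, and the incoming amplitude feeding vertex $o(\mathbb{P}_j)$ from the tail is exactly $\alpha_j$, while $\beta_j$ is the amplitude that the internal graph sends back out along $\mathbb{P}_j$.

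Next I would apply the stationary local relation at each boundary vertex $u_j=o(\mathbb{P}_j)$. Writing $\{a_1,\dots,a_d\}=\{a:o(a)=u_j\}$ with $a_1=\bar e_j$ (the arc going from $u_j$ into the tail) and the remaining arcs lying in $A_0$, equation \eqref{eq:localdyamics} applied to $\Psi_\infty$ gives the vector identity $\Psi_\infty(a_i)=\big(\mathrm{Sz}(\bs p_{u_j})\,\Psi_\infty(\bar{\cdot})\big)_i$ where $\bs p_{u_j}=(\sqrt{p(a_1)},\dots)^\top$ and, by reversibility \eqref{eq:revRW}, $\bs p_{u_j}=\big(\sqrt{m_E(|a_1|)/m_V(u_j)},\dots\big)^\top$. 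The key algebraic observation is that $\mathrm{Sz}(\bs p_{u_j})$ is a reflection: $\Psi_\infty(a_i)+\Psi_\infty(\bar a_i)=2\,\bs p_{u_j}(i)\,\langle \bs p_{u_j},\Psi_\infty(\bar{\cdot})\rangle$, so the quantity $\Psi_\infty(a_i)+\Psi_\infty(\bar a_i)$ is proportional to $\sqrt{m_E(|a_i|)}$ at each boundary vertex. I would then sum/aggregate these relations over all boundary arcs. The idea is that the "even part" $\Phi(a):=\Psi_\infty(a)+\Psi_\infty(\bar a)$ on $A_0$ is, by the same computation run at every interior vertex of $G_0$, a vector satisfying a harmonic-type (reversible-random-walk) equation with the $\sqrt{m_V}$-weighting, forcing $\Phi$ restricted to $\delta A$ to be a scalar multiple of $\bs m_{\delta E}$ up to the $\sqrt{m(\delta G_0)}$ normalization — this uses irreducibility of $P'$ and the fact that $1$ is not an eigenvalue of the dissipative cut-off operator $P'$ (equivalently, that the only global solution is the one matching the boundary data through the reversible measure).

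Concretely, from $\Phi|_{\delta A}=c\,\sqrt{m(\delta G_0)}\,\bs m_{\delta E}$ for some scalar $c$, i.e. $\alpha_j+\beta_j = c\sqrt{m_E(|e_j|)}$, I then determine $c$ by a conservation/normalization identity: pairing with $\bs m_{\delta E}$ and using $\|\bs m_{\delta E}\|=1$ gives $c\sqrt{m(\delta G_0)}=2\langle \bs m_{\delta E},\bs\alpha_{in}\rangle/\|\bs m_{\delta E}\|^2=2\langle\bs m_{\delta E},\bs\alpha_{in}\rangle$ (the factor $2$ coming from the reflection structure), whence $\bs\alpha_{in}+\bs\beta_{out}=2\bs m_{\delta E}\langle\bs m_{\delta E},\bs\alpha_{in}\rangle$, that is, $\bs\beta_{out}=(2\bs m_{\delta E}\bs m_{\delta E}^*-I)\bs\alpha_{in}=\mathrm{Sz}(\bs m_{\delta E})\bs\alpha_{in}$, as claimed.

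I expect the main obstacle to be the middle step: rigorously propagating the local reflection identities through the interior of $G_0$ to conclude that $\Phi|_{\delta A}$ is forced to be collinear with $\bs m_{\delta E}$. This is essentially a unique-solvability statement for the stationary state on the finite part, and the honest way to get it is to set up the reduced dynamical system \eqref{eq:master_eq}, use that $E_{PON}=\chi U\chi^*$ is a contraction whose fixed-point equation $(I-E_{PON})\psi_\infty=\rho$ has a unique solution because $1\notin\mathrm{spec}(E_{PON})$ (a consequence of the open-system structure and the irreducibility built into $p>0$), and then identify the boundary values of that unique solution by checking that the reversible-measure ansatz solves it. The tail-side bookkeeping (getting the signs and the factor of $2$ right, and confirming $\Psi_\infty(e_j)=\alpha_j$ genuinely holds in the limit rather than only formally) is routine but must be done carefully.
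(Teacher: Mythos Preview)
Your derivation that $\alpha_j+\beta_j=c\sqrt{m_E(|e_j|)}$ for a single constant $c$ (equivalently $\bs\alpha_{in}+\bs\beta_{out}=c\sqrt{m(\delta G_0)}\,\bs m_{\delta E}$) is essentially the paper's argument: the relation $U\Psi_\infty=\Psi_\infty$ forces $|\rho_V|^2$ to satisfy detailed balance, hence $\rho_V=c\sqrt{m_V}$, hence $\Psi_\infty(a)+\Psi_\infty(\bar a)=2c\sqrt{m_E(|a|)}$. The genuine gap is your determination of $c$. Pairing $\bs\alpha_{in}+\bs\beta_{out}=c\sqrt{m(\delta G_0)}\,\bs m_{\delta E}$ with $\bs m_{\delta E}$ only gives $c\sqrt{m(\delta G_0)}=\langle\bs m_{\delta E},\bs\alpha_{in}\rangle+\langle\bs m_{\delta E},\bs\beta_{out}\rangle$; to get the factor $2$ you need $\langle\bs m_{\delta E},\bs\beta_{out}\rangle=\langle\bs m_{\delta E},\bs\alpha_{in}\rangle$, which is exactly the statement you are trying to prove (it is the $\bs m_{\delta E}$-component of $\bs\beta_{out}=\mathrm{Sz}(\bs m_{\delta E})\bs\alpha_{in}$). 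Invoking ``the reflection structure'' here is circular. The paper closes this gap differently: it first specializes to inputs $\alpha_j=\delta_{ji}$, then uses unitarity of $U$ (the computation $\langle\bs\beta^{(i')},\bs\beta^{(i)}\rangle=\delta_{i'i}$) to pin $c_i$ down to either $0$ or $\sqrt{m_E(|e_i|)}/m(\delta G_0)$, and finally excludes $c_i=0$ by a double-counting argument on $\tau_{G_0}=\sum_{a\in A_0}\sqrt{m_E(|a|)}\Psi_\infty^{(i)}(a)$ that yields a contradiction. None of these ingredients appear in your proposal.

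A secondary but real error: your fallback plan asserts $1\notin\mathrm{spec}(E_{PON})$. This is false in general; the cycle states $w_c$ of Lemma~\ref{lem:cycle} are supported in $A_0$ and satisfy $E_{PON}\chi w_c=\chi w_c$, so $\ker(I-E_{PON})$ is typically nontrivial. The correct statement (used in \cite{HS} and invoked by the paper) is that the stationary state is the unique solution \emph{orthogonal} to this centered eigenspace, not that the eigenspace is empty. This does not affect the scattering formula directly, but it means your proposed route to uniqueness would not work as stated.
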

The scattering for the reversible case is determined by only the {\it surface} of the internal graph. 
In the following theorem, we obtain the information of the stationary state of the {\it interior} of the internal graph as follows. 
\begin{theorem}\label{thm:main}
Let the underlying random walk is reversible (\ref{eq:DBC}). 
Define $\mathrm{j}(\cdot)\in \mathbb{C}^{\tilde{A}}$ by 
	\[ \mathrm{j}(a):=\sqrt{m_E(|a|)}\Psi_\infty(a)-\frac{m_E(|a|)}{\sqrt{m(\delta G_0)}} \langle \bs{m}_{\delta E}, \bs{\alpha}_{in} \rangle. \]
Then $\mathrm{j}(\cdot)$ describes the electric current flow 
of the following electric circuit: 
the conductance is assigned at every edge, and the conductance value at each edge $e$ is given by $m_E(e)$. 
More precisely, $\mathrm{j}(a)$ satisfies Kirchhoff's current and voltage laws: 
\begin{enumerate}
        \item Kirchhoff's current law: 
        \begin{equation}\label{eq:KCL}
        \sum_{t(a)=u}\mathrm{j}(a)  =\sum_{o(a)=u}\mathrm{j}(a)=0,\;\; \mathrm{j}(a)+\mathrm{j}(\bar{a})=0;
        \end{equation}
        \item Kirchhoff's voltage law: 
        \begin{equation}\label{eq:KVL}
            \sum_{k=1}^s \frac{\mathrm{j}(a_k)}{m_E(|a_k|)} = 0\;\; \mathrm{for\;any\;cycle\;}c=(a_1,\dots,a_s).
        \end{equation}
\end{enumerate}

\end{theorem}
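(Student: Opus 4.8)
The plan is to unfold the fixed-point identity $U\Psi_\infty=\Psi_\infty$ vertex by vertex, rewrite everything in the weighted amplitude $\phi(a):=\sqrt{m_E(|a|)}\,\Psi_\infty(a)$, and recognise Kirchhoff's laws in the resulting relations. Fix $u\in\tilde V$ and set $\{a_1,\dots,a_d\}=\{a:o(a)=u\}$. From \eqref{eq:localdyamics} and stationarity, $\Psi_\infty(a_i)=2\sqrt{p(a_i)}\sum_{j}\sqrt{p(a_j)}\,\Psi_\infty(\bar{a}_j)-\Psi_\infty(\bar{a}_i)$; inserting reversibility in the form \eqref{eq:revRW}, $\sqrt{p(a_i)}=\sqrt{m_E(|a_i|)/m_V(u)}$, and multiplying by $\sqrt{m_E(|a_i|)}$, one gets for every arc $a$
\[
\phi(a)+\phi(\bar{a})=\frac{2\,m_E(|a|)}{m_V(o(a))}\sum_{t(c)=o(a)}\phi(c).\qquad(\star)
\]
The factor $m_V(u)^{-1}\sum_{t(c)=u}\phi(c)$ on the right of $(\star)$ depends on $a$ only through $o(a)$, while the left-hand side is invariant under $a\leftrightarrow\bar{a}$; hence this factor takes equal values at the two endpoints of every arc, so by connectedness of $\tilde G$ it is a constant $\kappa$. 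Thus $(\star)$ is equivalent to the two families
\[
\phi(a)+\phi(\bar{a})=2\kappa\,m_E(|a|)\ \ (a\in\tilde A),\qquad \sum_{t(c)=u}\phi(c)=\kappa\,m_V(u)\ \ (u\in\tilde V).
\]
Evaluating the first on $a=e_j$ gives $\Psi_\infty(e_j)+\Psi_\infty(\bar{e}_j)=2\kappa\sqrt{m_E(|e_j|)}$; comparing with $\bs{\beta}_{out}=\mathrm{Sz}(\bs{m}_{\delta E})\bs{\alpha}_{in}$ from Theorem~\ref{thm:scattering} forces $\kappa=\langle\bs{m}_{\delta E},\bs{\alpha}_{in}\rangle/\sqrt{m(\delta G_0)}$, so $\mathrm{j}(a)=\phi(a)-\kappa\,m_E(|a|)=\tfrac12\bigl(\phi(a)-\phi(\bar{a})\bigr)$ is exactly the function in the statement.

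Kirchhoff's current law \eqref{eq:KCL} is then immediate: $\mathrm{j}(a)+\mathrm{j}(\bar{a})=0$ is the first identity above, and since $\sum_{t(a)=u}m_E(|a|)=\sum_{o(a)=u}m_E(|a|)=m_V(u)$, the second identity gives $\sum_{t(a)=u}\mathrm{j}(a)=\kappa m_V(u)-\kappa m_V(u)=0$, hence also $\sum_{o(a)=u}\mathrm{j}(a)=-\sum_{t(b)=u}\mathrm{j}(b)=0$. Kirchhoff's voltage law \eqref{eq:KVL}, on the other hand, is the only step that is not a mechanical rewriting, and it is where I expect the real work to lie: being divergence-free does not force a flow to be a gradient (when $G_0$ carries cycles), so \eqref{eq:KVL} cannot be read off from $(\star)$ the way \eqref{eq:KCL} was; one must instead exploit that $\Psi_\infty$ is \emph{the} stationary state.

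Accordingly, I would establish \eqref{eq:KVL} by uniqueness. Let $\mathrm{j}_{EC}$ be the current flow on $\tilde G$ for the conductances $m_E(|\cdot|)$ carrying current $I_j:=\sqrt{m_E(|e_j|)}\,\alpha_j-\kappa\,m_E(|e_j|)$ inward along each tail $\mathbb P_j$; the value of $\kappa$ found above is precisely what makes $\sum_j I_j=0$, so $\mathrm{j}_{EC}$ is a well-defined, bounded gradient flow, divergence-free at every vertex (constant $=I_j$ along $\mathbb P_j$). Put $\Psi_{EC}(a):=\kappa\sqrt{m_E(|a|)}+\mathrm{j}_{EC}(a)/\sqrt{m_E(|a|)}$. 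A direct check shows that $\Psi_{EC}$ satisfies $(\star)$ at every vertex — the symmetric part from $\mathrm{j}_{EC}(a)+\mathrm{j}_{EC}(\bar{a})=0$, the sum relation from $\sum_{t(c)=u}\mathrm{j}_{EC}(c)=0$ — and that $\Psi_{EC}(e_j)=\alpha_j$, so that $\chi U(1-\chi^*\chi)\Psi_{EC}=\rho$. Hence $\chi\Psi_{EC}$ solves the fixed-point equation $\psi=E_{PON}\psi+\rho$ of \eqref{eq:master_eq}, which has a unique solution (equivalently $I-E_{PON}$ is invertible, as established in~\cite{HS}); therefore $\chi\Psi_{EC}=\chi\Psi_\infty$, and propagating along the tails yields $\Psi_{EC}=\Psi_\infty$ on all of $\tilde A$. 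Consequently $\mathrm{j}=\mathrm{j}_{EC}$, which, being an electric current, obeys \eqref{eq:KVL}. (An alternative, heavier route would bypass $\mathrm{j}_{EC}$ altogether by expressing $(I-E_{PON})^{-1}$ through the Green function $(I-P')^{-1}$ of the underlying random walk and reading off the electric potential directly; this would require the spectral correspondence between $U$ and its underlying random walk.)
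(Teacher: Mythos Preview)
Your derivation of Kirchhoff's current law matches the paper's: both unfold $U\Psi_\infty=\Psi_\infty$ into the pair of identities $\phi(a)+\phi(\bar a)=2\kappa\,m_E(|a|)$ and $\sum_{t(c)=u}\phi(c)=\kappa\,m_V(u)$ and identify the constant $\kappa$ via the boundary data. The divergence with the paper is entirely in the voltage law, and there your argument has a real gap. You assert that $\psi=E_{PON}\psi+\rho$ has a unique solution, equivalently that $I-E_{PON}$ is invertible; but this is false precisely when $G_0$ contains cycles, i.e.\ precisely when \eqref{eq:KVL} has content. For any cycle $c=(a_1,\dots,a_s)$ in $G_0$, the function $w_c$ given by $w_c(a_k)=1/\sqrt{m_E(|a_k|)}$, $w_c(\bar a_k)=-1/\sqrt{m_E(|a_k|)}$ and zero elsewhere satisfies $Uw_c=w_c$ (a short check using \eqref{eq:revRW}) and is supported in $A_0$, so $\chi w_c\in\ker(I-E_{PON})$. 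Hence your $\chi\Psi_{EC}$ and the true $\psi_\infty$ are both solutions of the fixed-point equation, but you cannot conclude they coincide from that alone; what \cite{HS} actually provides is that $\psi_\infty$ lies in the orthogonal complement of the centered eigenspace of $E_{PON}$, not that this eigenspace is trivial.

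The paper exploits exactly this orthogonality, but in the opposite direction: rather than building $\mathrm{j}_{EC}$ and arguing uniqueness, it observes that each $w_c$ sits in the centered eigenspace and hence $\langle w_c,\Psi_\infty\rangle=0$, which unwinds directly to $\sum_k \mathrm{j}(a_k)/m_E(|a_k|)=0$. Your route can be repaired, but the repair is to check $\chi\Psi_{EC}\perp\ker(I-E_{PON})$; once you identify the cycle functions as spanning (at least the eigenvalue-$1$ part of) that kernel, the required orthogonality $\langle w_c,\Psi_{EC}\rangle=0$ is nothing but the voltage law for $\mathrm{j}_{EC}$ --- so the fix essentially imports the paper's key observation. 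In short, the paper's argument is shorter and avoids the detour through existence of the electric current, while your approach would need the structure of $\ker(I-E_{PON})$ made explicit before uniqueness can be invoked.
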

To see how quantum walker is distributed on the internal graph in the long time limit, we define 
the relative finding probability at position $u\in V_0$ by 
	\[ \mu_{QW}(u) := \sum_{a\in\tilde{A}: t(a)=u}|\Psi_\infty(a)|^2. \]
Then we obtain the following corollary. 
\begin{corollary}\label{cor:ecrw}
Let the local electric power at vertex $u\in V$ be $w_{EC}(u):=\sum_{t(a)=u}\mathrm{j}(a)v(a)$, where $v(a)$ is the electric potential difference between 
$t(a)$ and $o(a)$, that is, $\mathrm{j}(a)/m_E(|a|)$. Then the relative finding probability of our quantum walk can be expressed by 
sum of the electric power and a reversible measure of the random walk: 
	\[ \mu_{QW}(u)=w_{EC}(u)+m_{RW}(u), \]
where $m_{RW}(u)$ is a reversible measure of the underlying random walk denoted by
	\[ m_{RW}(u) = \frac{|\langle \bs{m}_{\delta E}, \bs{\alpha}_{in} \rangle|^2}{m(\delta G_0)}m_V(u). \]
\end{corollary}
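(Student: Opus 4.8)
The plan is to prove Corollary~\ref{cor:ecrw} by directly expanding $\mu_{QW}(u)=\sum_{t(a)=u}|\Psi_\infty(a)|^2$ using the definition of the current $\mathrm{j}(\cdot)$ from Theorem~\ref{thm:main}. Writing $c:=\langle\bs{m}_{\delta E},\bs{\alpha}_{in}\rangle$ for brevity, Theorem~\ref{thm:main} gives
\[
\Psi_\infty(a)=\frac{\mathrm{j}(a)}{\sqrt{m_E(|a|)}}+\frac{\sqrt{m_E(|a|)}}{\sqrt{m(\delta G_0)}}\,c,
\]
so that
\[
|\Psi_\infty(a)|^2=\frac{|\mathrm{j}(a)|^2}{m_E(|a|)}+\frac{m_E(|a|)}{m(\delta G_0)}|c|^2
+\frac{2}{\sqrt{m(\delta G_0)}}\,\mathrm{Re}\!\left(\mathrm{j}(a)\,\bar c\right).
\]
Summing over $a$ with $t(a)=u$, the second term contributes $\tfrac{|c|^2}{m(\delta G_0)}\sum_{t(a)=u}m_E(|a|)=\tfrac{|c|^2}{m(\delta G_0)}m_V(u)=m_{RW}(u)$, using $m_V(u)=\sum_{o(a)=u}m_E(|a|)=\sum_{t(a)=u}m_E(|a|)$ from the reversibility relation \eqref{eq:DBC}.

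The first term, $\sum_{t(a)=u}|\mathrm{j}(a)|^2/m_E(|a|)$, must be identified with the local electric power $w_{EC}(u)=\sum_{t(a)=u}\mathrm{j}(a)v(a)$ where $v(a)=\mathrm{j}(a)/m_E(|a|)$. A priori $\mathrm{j}(a)v(a)=|\mathrm{j}(a)|^2/m_E(|a|)$ only if $\mathrm{j}(a)$ is real (or if one reads $\mathrm{j}(a)v(a)$ as $\mathrm{j}(a)\overline{v(a)}$); I would check which convention the paper intends, but in either reading these two terms coincide, so $\sum_{t(a)=u}|\mathrm{j}(a)|^2/m_E(|a|)=w_{EC}(u)$ and the first term is exactly the electric power. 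The remaining task is to show the cross term $\tfrac{2}{\sqrt{m(\delta G_0)}}\,\mathrm{Re}\,\bar c\sum_{t(a)=u}\mathrm{j}(a)$ vanishes; this is immediate from Kirchhoff's current law \eqref{eq:KCL}, which states $\sum_{t(a)=u}\mathrm{j}(a)=0$ at every vertex $u\in V_0$.

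Assembling the three pieces gives $\mu_{QW}(u)=w_{EC}(u)+m_{RW}(u)$, which is the claim. I expect no serious obstacle here: the corollary is essentially a bookkeeping consequence of Theorem~\ref{thm:main} once the decomposition of $\Psi_\infty$ is substituted. The one point requiring care is the cancellation of the cross term, which relies on Kirchhoff's current law holding at \emph{all} vertices of $V_0$ (including $\delta V$); I would double-check that Theorem~\ref{thm:main} indeed asserts \eqref{eq:KCL} for every $u\in V_0$ and not only for interior vertices, and if there is a boundary subtlety (e.g.\ at the $o(\mathbb{P}_j)$ where a tail is attached and $\mathrm{j}$ on the tail edge must be accounted for), handle it by including the tail arc $e_j$ (or $\bar e_j$) in the sum $\sum_{t(a)=u}\mathrm{j}(a)$ as the external current injected at that node.
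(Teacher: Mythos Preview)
Your proposal is correct and follows essentially the same approach as the paper: substitute the decomposition of $\Psi_\infty$ from Theorem~\ref{thm:main}, expand the square, kill the cross term via Kirchhoff's current law~\eqref{eq:KCL}, and identify the two surviving sums with $w_{EC}(u)$ and $m_{RW}(u)$. The paper's proof is terser (it writes $\mathrm{j}^2(a)/m_E(|a|)$ directly and remarks ``we used Kirchhoff's current law'' for the vanishing of the cross term), but the logic is identical; your caveats about the real/complex reading of $\mathrm{j}(a)v(a)$ and about KCL at boundary vertices are reasonable sanity checks that the paper simply does not dwell on.
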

\begin{proof}
It is obtained by a direct computing as follows. 
	\begin{align*}
        \mu_{QW}(u) &= \sum_{t(a)=u}|\Psi_\infty(a)|^2 \\ &=\sum_{t(a)=u}\left|\frac{1}{\sqrt{m_E(|a|)}}\mathrm{j}(a)+\frac{\sqrt{m_E(|a|)}}{\sqrt{m(\delta G_0)}}\langle \bs{m}_{\delta E}, \bs{\alpha}_{in} \rangle\right|^2 \\
               &= \sum_{t(a)=u} \frac{\mathrm{j}^2(a)}{m_E(|a|)}
               	+ \sum_{t(a)=u} \frac{m_E(|a|)}{m(\delta G_0)} |\langle \bs{m}_{\delta E}, \bs{\alpha}_{in} \rangle|^2 \\
               &= w_{EC}(u)+ \frac{|\langle \bs{m}_{\delta E}, \bs{\alpha}_{in} \rangle|^2}{m(\delta G_0)}m_V(u).
        \end{align*}
Here we used Kirchhoff's current law in the second equality and the definition of $w_{EC}$  in the last equality. 
\end{proof}
By the expression of the relative probability, 
we can control our quantum walk ``preference" to electric circuit or random walk 
by adjusting the overlap of input flow $\bs{\alpha}_{in}$ to the conductance of the boundary $\bs{m}_{\delta E}$. 
\begin{corollary}\label{cor:ER}
If $\bs{m}_{\delta E}\bot \bs{\alpha}_{in}$, then $\mu_{QW}(u)=w_{EC}(u)$. 
On the other hand, if $\bs{m}_{\delta E}\parallel \bs{\alpha}_{in}$, then $\mu_{QW}(u)=m_{RW}(u)$. 
\end{corollary}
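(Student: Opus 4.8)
The plan is to obtain both identities directly from the decomposition $\mu_{QW}(u)=w_{EC}(u)+m_{RW}(u)$ furnished by Corollary~\ref{cor:ecrw}, in which $m_{RW}(u)=\dfrac{|\langle \bs{m}_{\delta E},\bs{\alpha}_{in}\rangle|^{2}}{m(\delta G_0)}\,m_V(u)$. The case $\bs{m}_{\delta E}\bot\bs{\alpha}_{in}$ is then instantaneous: $\langle \bs{m}_{\delta E},\bs{\alpha}_{in}\rangle=0$ forces $m_{RW}\equiv 0$, hence $\mu_{QW}(u)=w_{EC}(u)$ for every $u\in V_0$. So the real content is the parallel case, where one must show $w_{EC}\equiv 0$; by the identity $w_{EC}(u)=\sum_{t(a)=u}|\mathrm{j}(a)|^{2}/m_E(|a|)$ established inside the proof of Corollary~\ref{cor:ecrw} (all terms nonnegative), this amounts to proving that the current $\mathrm{j}(\cdot)$ of Theorem~\ref{thm:main} vanishes on every arc incident to a vertex of $V_0$.

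First I would read off the flux of $\mathrm{j}$ through the boundary edges $e_1,\dots,e_r$. Using $\sqrt{m_E(|e_j|)}/\sqrt{m(\delta G_0)}=\bs{m}_{\delta E}(j)$ and $\Psi_\infty(e_j)=\bs{\alpha}_{in}(j)$, the definition of $\mathrm{j}$ yields
\[
\mathrm{j}(e_j)=\sqrt{m_E(|e_j|)}\,\Big(\bs{\alpha}_{in}(j)-\bs{m}_{\delta E}(j)\,\langle \bs{m}_{\delta E},\bs{\alpha}_{in}\rangle\Big)
=\sqrt{m_E(|e_j|)}\,\big[(I-\bs{m}_{\delta E}\bs{m}_{\delta E}^{*})\bs{\alpha}_{in}\big](j),
\]
so the boundary flux of $\mathrm{j}$ is, componentwise, $\sqrt{m_E(|e_j|)}$ times the component of $\bs{\alpha}_{in}$ orthogonal to $\bs{m}_{\delta E}$; here I use that $\bs{m}_{\delta E}$ is a genuine unit vector, i.e.\ $\sum_{j}m_E(|e_j|)=m(\delta G_0)$. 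Consequently, when $\bs{m}_{\delta E}\parallel\bs{\alpha}_{in}$ this orthogonal component is $\bs 0$ and $\mathrm{j}(e_j)=0$ for all $j=1,\dots,r$.

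It then remains to propagate the vanishing into the interior: a current obeying Kirchhoff's laws \eqref{eq:KCL}--\eqref{eq:KVL} with zero flux through all boundary edges is identically zero on the internal graph. By \eqref{eq:KVL}, $v(a)=\mathrm{j}(a)/m_E(|a|)$ is a closed, hence exact, $1$-form on the finite connected graph $G_0$, so $v(a)=\phi(o(a))-\phi(t(a))$ for some $\phi\in\mathbb{C}^{V_0}$; then \eqref{eq:KCL} at an interior vertex $u$ reads $\sum_{o(a)=u}m_E(|a|)\big(\phi(u)-\phi(t(a))\big)=0$, i.e.\ $\phi(u)=m_V(u)^{-1}\sum_{o(a)=u}m_E(|a|)\phi(t(a))$, the mean-value property for the conductances $m_E$, and \eqref{eq:KCL} at $o(\mathbb{P}_j)$ combined with $\mathrm{j}(\bar e_j)=-\mathrm{j}(e_j)=0$ gives the same property at each boundary vertex. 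A function satisfying the mean-value property at every vertex of a finite connected graph is constant, whence $\mathrm{j}\equiv 0$ on $A_0$; together with $\mathrm{j}(e_j)=0$ this kills every term of $w_{EC}(u)$, so $\mu_{QW}(u)=m_{RW}(u)$.

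The one step that is more than bookkeeping is this concluding uniqueness argument for the electric network, though it is entirely standard (a finite harmonic function is constant). If one wishes to avoid it, the parallel case can instead be settled by identifying the stationary state outright: when $\bs{\alpha}_{in}\parallel\bs{m}_{\delta E}$ the ratio $c:=\bs{\alpha}_{in}(j)/\sqrt{m_E(|e_j|)}$ is independent of $j$, and one checks from the local rule \eqref{eq:localdyamics} and the reversibility relation \eqref{eq:revRW} that $\Psi(a):=c\,\sqrt{m_E(|a|)}$ is a fixed point of $U$ carrying exactly the prescribed inflow, so by uniqueness of the stationary state~\cite{HS} it equals $\Psi_\infty$; substituting it into the definition of $\mathrm{j}(\cdot)$ gives $\mathrm{j}\equiv 0$ immediately, hence $w_{EC}\equiv 0$ and $\mu_{QW}(u)=m_{RW}(u)$.
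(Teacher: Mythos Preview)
Your argument is correct and follows the same outline as the paper: the orthogonal case is immediate from Corollary~\ref{cor:ecrw}, and in the parallel case one computes $\mathrm{j}(e_j)=0$ at each boundary arc and then propagates this vanishing inward via the Kirchhoff laws. The paper is terser at the last step---it simply asserts that Kirchhoff's voltage law forces $\mathrm{j}\equiv 0$ once the boundary influx is zero---whereas you spell out the standard harmonic-function argument (potential from \eqref{eq:KVL}, mean-value property from \eqref{eq:KCL}, constancy on a finite connected graph); your alternative route via directly exhibiting $\Psi_\infty(a)=c\sqrt{m_E(|a|)}$ and invoking uniqueness from~\cite{HS} does not appear in the paper and is a clean independent verification.
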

\begin{proof}
If $\bs{m}_{\delta E}\bot \bs{\alpha}_{in}$, Corollary~\ref{cor:ecrw} implies $\mu_{QW}(u)=w_{EC}(u)$. 
On the other hand, if $\bs{m}_{\delta E}\parallel \bs{\alpha}_{in}$, then by Theorem~\ref{thm:main}, 
the input electric current from $e_j\in \delta A$ can be computed as follows. 
Since $\bs{m}_{\delta E}\parallel \bs{\alpha}_{in}$, the input quantum walk's flow can be described by 
$\alpha_j=c\sqrt{m_E(|e_j|)}$ with some constant $c$. 
Then we have 
	\[ \mathrm{j}(e_j)=\sqrt{m_E(|e_j|)}\alpha_j-\frac{m_E(|e_j|)}{\sqrt{m(\delta G_0)}} \langle \bs{m}_{\delta E}, \bs{\alpha}_{in} \rangle=0,   \]
for any $e_j\in \delta A$. Thus if $\bs{m}_{\delta E}\parallel \bs{\alpha}_{in}$, 
then the electric current flow is not supplied to the internal graph. 
Therefore Kirchhoff's voltage law (\ref{eq:KVL}) implies $\mathrm{j}(a)=0$ for any $a\in \tilde{A}$. 
\end{proof}

On the other hand, if the random walk is non-reversible, 
then we obtain different properties of the stationary state from those of the reversible case; in particular, 
we show that the response is just a phase flip of the input. 
\begin{theorem}\label{thm:non-revcase}
Assume the underlying random walk is non-reversible. 
Let the input and its response be $\bs{\alpha}_{in}$ and $\bs{\beta}'_{put}$, respectively.	
Then we have
	\[ \bs{\beta}_{out}'=-\bs{\alpha}_{in}. \]
Moreover the stationary state $\Psi_\infty$ has the following properties:  
	\begin{align*}
        \sum_{a\in\tilde{A}:t(a)=u} & \sqrt{p(\bar{a})}\Psi_\infty(a) = 0,\;\;(u\in \tilde{V}) \\
        \Psi_\infty(\bar{a}) &= -\Psi_\infty(a)\;\;(a\in\tilde{A}), \\
        \Psi_\infty \in &\ker(1-\chi^* E_{PON} \chi)^\perp.
        \end{align*}
\end{theorem}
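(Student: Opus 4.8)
The plan is to work directly with the fixed-point equation $U\Psi_\infty = \Psi_\infty$ from Theorem~1 (of \cite{HS}), combined with the structure of the local dynamics \eqref{eq:localdyamics}. First I would examine the local equation at an arbitrary vertex $u\in\tilde V$ with out-arcs $\{a_1,\dots,a_d\}$. Stationarity forces $\mathrm{Sz}([\sqrt{p(a_1)},\dots,\sqrt{p(a_d)}]^\top)$ applied to $(\Psi_\infty(\bar a_1),\dots,\Psi_\infty(\bar a_d))^\top$ to equal $(\Psi_\infty(a_1),\dots,\Psi_\infty(a_d))^\top$. Writing $\bs p_u=[\sqrt{p(a_1)},\dots,\sqrt{p(a_d)}]^\top$, recall $\mathrm{Sz}(\bs p_u)=2\bs p_u\bs p_u^*-I$, so this says $\Psi_\infty(a_i) = 2\sqrt{p(a_i)}\big(\sum_j \sqrt{p(\bar a_j)}\,\Psi_\infty(\bar a_j)\big)^{\!*\text{-scalar}} - \Psi_\infty(\bar a_i)$. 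The key observation is that the scalar $S_u:=\sum_j \sqrt{p(a_j)}\,\Psi_\infty(\bar a_j)$ controls everything at $u$: if $S_u=0$ for every $u$, then $\Psi_\infty(a)=-\Psi_\infty(\bar a)$ for all $a$, giving the second displayed identity, and then the first identity $\sum_{t(a)=u}\sqrt{p(\bar a)}\Psi_\infty(a)=0$ is exactly $S_u=0$ rewritten (since $t(a)=u$ iff $a=\bar a_j$ for some out-arc $a_j$ of $u$, and $p(\bar a)$ there is $p(a_j)$). So the crux is to prove $S_u=0$ for all $u$; the non-reversibility hypothesis must enter precisely here.

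For that step I would use the behaviour on the tails together with an energy/flow-conservation argument. On each tail $\mathbb P_j$ the dynamics is the free walk, so the stationary amplitudes are determined by a simple transfer relation; matching the constant inflow $\alpha_j$ along $e_j$ against the outflow along $\bar e_j$, and using that on the free part $p\equiv 1/2$ so the Szegedy coin there is the standard $2\times2$ reflection, one gets a linear relation between $\Psi_\infty(e_j)=\alpha_j$ and $\Psi_\infty(\bar e_j)=\beta_j$ at each boundary vertex $o(\mathbb P_j)$. I expect this to collapse to $S_{o(\mathbb P_j)}=0$ once one writes out the coin at that degree-$(\deg_{G_0}+1)$ vertex. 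The interior vanishing $S_u=0$ for $u\in V_0\setminus\delta V$ then follows by propagating: from $\Psi_\infty(a)=-\Psi_\infty(\bar a)$ already forced at boundary arcs, feed this into neighbours' local equations and use an induction on distance, or — cleaner — observe that $\Psi_\infty$ is a fixed point of $U$ and that $U$ restricted to the "odd" subspace $\{\Psi:\Psi(\bar a)=-\Psi(a)\}$ acts as $-I$ on the free tails, so any genuine fixed point that is not globally odd would have to be supported, in its "even" part, on an eigenvector of $E_{PON}$ with eigenvalue $1$; non-reversibility is what rules that out. Concretely: $1$ is an eigenvalue of $E_{PON}$ (equivalently, of the induced coin-structure restricted to $A_0$) with a surviving eigenvector in the relevant subspace if and only if the underlying walk on $G_0$ admits a detailed-balance measure, i.e. is reversible.

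From $S_u\equiv0$ the remaining three conclusions are short. The second displayed identity $\Psi_\infty(\bar a)=-\Psi_\infty(a)$ is immediate as above; applied to $a=e_j$ it gives $\beta_j=-\alpha_j$, i.e. $\bs\beta_{out}'=-\bs\alpha_{in}$. The first identity is $S_u=0$ restated. For the last, $\Psi_\infty\in\ker(1-\chi^*E_{PON}\chi)^\perp$: I would argue that the dynamical system \eqref{eq:master_eq} forces $\psi_n=\chi\Psi_\infty$ to converge, and the limit of $\psi_{n+1}=E_{PON}\psi_n+\rho$ decomposes $\psi_\infty$ into a part in $\ker(1-E_{PON})$ and a part in its orthogonal complement; the standard Cesàro/telescoping argument (as in \cite{HS}) shows the limit of the iteration lands in the orthogonal complement of the fixed space unless $\rho$ has a component there, and the non-reversible hypothesis forces $\rho\perp\ker(1-E_{PON})$. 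Lifting via $\chi^*$ and using that $\Psi_\infty$ is supported (in its internal part) exactly on the image of $\chi^*$ gives the stated orthogonality.

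\textbf{Main obstacle.} The genuine difficulty is isolating where \emph{non-reversibility} is used: one must show $S_u=0$ \emph{fails to hold automatically}, so that the only stationary state is the odd one — equivalently, that $\ker(1-E_{PON})$ (or the analogous kernel on arc space) is trivial exactly when detailed balance fails. I expect this to reduce to a Perron--Frobenius-type statement: a positive-coin fixed-point vector exists iff the transition weights $\{\sqrt{p(a)}\}$ arise from a symmetric $m_E$, which is \eqref{eq:DBC}. Making that dichotomy precise, and checking it survives the cut-off to $G_0$ and the coupling to the tails, is the heart of the proof; the rest is bookkeeping with the reflection coins.
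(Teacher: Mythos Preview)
Your $S_u$ is exactly the paper's $\rho_V(u)$, and you correctly identify that the whole theorem follows once $S_u\equiv 0$. But you miss the two-line argument that does this. From your own local equation you have
\[
\Psi_\infty(a)+\Psi_\infty(\bar a)=2\sqrt{p(a)}\,S_{o(a)};
\]
now write the same identity with $a$ replaced by $\bar a$:
\[
\Psi_\infty(a)+\Psi_\infty(\bar a)=2\sqrt{p(\bar a)}\,S_{t(a)}.
\]
Equating the right-hand sides gives $p(a)\,|S_{o(a)}|^2=p(\bar a)\,|S_{t(a)}|^2$ for every arc, which is precisely the detailed-balance condition \eqref{eq:DBC} for the measure $u\mapsto |S_u|^2$. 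Since the walk is assumed non-reversible, no nonzero such measure exists, so $S_u\equiv 0$. This is the paper's argument (the preamble of Section~4 together with the first paragraph of \S4.3); everything else then drops out exactly as you say.

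Your proposed routes to $S_u=0$ both have problems. The boundary-first step is circular: on a tail vertex the coin is the $2\times 2$ Grover reflection, and $S_u$ there is a nonzero multiple of $\alpha_j+\beta_j$; so ``$S_{o(\mathbb P_j)}=0$'' is literally the conclusion $\beta_j=-\alpha_j$, not something the coin at $o(\mathbb P_j)$ gives you for free, and non-reversibility has not yet entered. Your spectral alternative is closer in spirit --- an ``even'' fixed vector of $U$ would indeed force a detailed-balance measure --- but that observation, once made precise, \emph{is} the comparison $\sqrt{p(a)}S_{o(a)}=\sqrt{p(\bar a)}S_{t(a)}$ above; routing it through eigenvectors of the cut-off operator $E_{PON}$ (rather than $U$) only obscures this and introduces the separate issue of how the tails interact with the kernel. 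Finally, the last assertion $\Psi_\infty\in\ker(1-\chi^*E_{PON}\chi)^\perp$ is a general convergence fact from \cite{HS} about the iteration \eqref{eq:master_eq}; it does not rely on non-reversibility, contrary to what you suggest.
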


In general, we observe the scattering of a quantum walker towards the tails which are not the input tails  in the long time limit for the reversible case. 
On the other hand, for a non-reversible case, we observe the scattering of a quantum walker towards only the input tails.
Then as an inverse problem, observing the scattering way of this quantum walk, we can detect whether the underlying random walk is reversible or not.
\section{Proof of Theorems}
The convergence of $\Psi_n$ is already ensured by \cite{FH1,FH2,HS}, that is, $\Psi_\infty:=\exists \lim_{n\to\infty}\Psi_n$. 
Then $U\Psi_\infty=\Psi_\infty$. 
This is equivalent to 
	\begin{equation}
        \frac{1}{2}\left( \Psi_\infty(a)+\Psi_\infty(\bar{a}) \right) = \sqrt{p(a)}\sum_{t(b)=o(a)} \sqrt{p(\bar{b})} \Psi_\infty(b)
        \end{equation}
for any $a\in \tilde{A}$ by the definition of $U$. 
Putting 
	\[ \rho_E(e) := \frac{1}{2}\sum_{|a|=e} \Psi_\infty(a),\; \rho_V(u) := \sum_{t(b)=u} \sqrt{p(\bar{b})} \Psi_\infty(b)  \] 
for any $e\in \tilde{E}$ and $u\in \tilde{V}$, we have 
	\begin{equation}\label{eqo} \rho_E(|a|)=\sqrt{p(a)}\rho_V(o(a)). \end{equation}
Inserting $\bar{a}$ into the above, we obtain
	\begin{equation}\label{eqt} \rho_E(|a|)=\sqrt{p(\bar{a})}\rho_V(t(a)). \end{equation} 
Comparing (\ref{eqo}) with (\ref{eqt}), we have 
	\begin{equation}\label{eqrev} 
        p(a)|\rho_V(o(a))|^2=p(\bar{a})|\rho_V(t(a))|^2 
        \end{equation}
Then $|\rho_V(\cdot)|^2$ must be the reversible measure (\ref{eq:DBC}) or the null measure. 
\subsection{Proof of Theorem~\ref{thm:scattering}}
In this section, we assume $P$ is reversible, that is, there exist $m_V \in \mathbb{R}^{\tilde{V}}\setminus\{\bs{0}\}$ and $m_E \in \mathbb{R}^{\tilde{E}}$
such that 
	\begin{equation}\label{eq:DBC2} 
	p(a)m_V(o(a)) = p(\bar{a})m_V(t(a)) = m_E(|a|). 
	\end{equation}
Now let us start the proof of Theorem~\ref{thm:scattering}. 
\begin{proof}
From the reversibility (\ref{eq:DBC2}), the probability associated with moving along each arc $a\in A$, $p(a)$, can be expressed by $p(a)=m_E(|a|)/m_V(o(a))$. 
By (\ref{eqrev}), there exists a constant value such that $\rho_V(u)=c\sqrt{m_V(u)}$. 
Remark that if $\rho_V^2$ is the null measure, then $c=0$. 
By (\ref{eqo}), $\rho_E(|a|)=c\sqrt{m_E(|a|)}$ which implies
	\begin{equation} \Psi_{\infty}(\bar{a}) = 2c \sqrt{m_E(|a|)}-\Psi_\infty(a). \end{equation}	
Note that the constant value $c$ depends on the initial state $(\alpha_1,\dots,\alpha_r)$. 

Let us consider the scattering with the initial state inserting the inflow from only a fixed tail $\mathbb{P}_i$, 
that is, $\alpha_j=\delta_{ji}$ for $j=1,\dots,r$. 
Putting the constant value $c$ by $c_i$, we  will determine $c_i$ using the given setting parameters of this model. 
The stationary state with this initial state is denoted by $\Psi^{(i)}\in \mathbb{C}^{\tilde{A}}$ and 
we put $\psi^{(i)}_\infty:=\chi\Psi^{(i)}_\infty \in \mathbb{C}^{A_0}$. 
Let the transmitting value toward the tail $\mathbb{P}_j$ with this initial state 
be rewritten by $\beta_{ji}:=2c_i\sqrt{m_E(|e_j|)}-\delta_{ji}$. 
Then by the stationarity, 
	\[ U(\chi^* \psi^{(i)}_\infty+\delta_{e_j})= \chi^* \psi^{(i)}_\infty + \bs{\beta}^{(i)} \]
holds. Here $\bs{\beta}^{(i)}=\sum_{j=1}^r \beta_{ji}\delta_{e_j}$. 
Putting $\bs{w}^{(i)}:=\chi^* \psi^{(i)}_\infty + \bs{\beta}^{(i)}$, 
we have 
	\begin{align*} 
	\langle \bs{w}^{(i')}, \bs{w}^{(i)} \rangle 
	&= \langle \psi^{(i')}, \psi^{(i)} \rangle + \delta_{ii'} \\
    &= \langle \psi^{(i')}, \psi^{(i)} \rangle + \langle \bs{\beta}^{(i')}, \bs{\beta}^{(i)}\rangle 
    \end{align*}
Thus $\langle \bs{\beta}^{(i')}, \bs{\beta}^{(i)}\rangle=\delta_{i',i}$ which is equivalent to 
	\[  c_i=0\;\mathrm{or}\; c_i=\sqrt{m_E(|e_j|)}/m(\delta G_0).  \]
Here we put $m(\delta G_0):=\sum_{j=1}^r m_E(|e_j|)$. 

Now we will show $c_i\neq 0$ in general. 
Assume $c_i=0$. 
Then $\rho_E(|a|)=\tilde{\rho}_V(u)=0$ holds, which implies $\Psi_\infty^{(i)}(\bar{a})=-\Psi_\infty^{(i)}(a)$ and also
	\[ \sum_{a\in \tilde{A}: t(a)=u}\sqrt{m_E(|a|)}\Psi_\infty^{(i)}(a) =0.  \]
Let us compute $\tau_{G_0}:=\sum_{a\in A_0}\sqrt{m_E(|a|)}\Psi^{(i)}(a)$ from the following two ways:  
	\begin{align*}
        \tau_{G_0}
        	&= \sum_{u\in V_0}\sum_{a\in A_0: t(a)=u}\sqrt{m_E(|a|)}\Psi^{(i)}_\infty(a) \\
                &= -\sum_{j=1}^r \sqrt{m_E(|e_j|)} \beta_{ji}=-1;
        \end{align*}
on the other hand, 
	\begin{align*}
        \tau_{G_0}
        	&= \sum_{u\in V_0}\sum_{a\in A_0: o(a)=u}\sqrt{m_E(|a|)}\Psi^{(i)}_\infty(a) \\
            &= \sum_{u\in V_0}\sum_{a\in A_0: t(a)=u}\sqrt{m_E(|a|)}\Psi^{(i)}_\infty(\bar{a}) \\
                &= -\sum_{u\in V_0}\sum_{a\in A_0: t(a)=u}\sqrt{m_E(|a|)}\Psi^{(i)}_\infty(a)
                = 1.
        \end{align*}
Then the contradiction occurs. 
Therefore we have
\begin{equation}\label{eq:ci}
c_i=\sqrt{m_E(|e_i|)}/m(\delta G_0). 
\end{equation}
%
Then if the inflow is $\bs{\alpha}(j)=\delta_{ij}$ for any $j=1,\dots,r$, we obtain 
	\[ \beta_{ji}=\frac{2}{m(\delta G_0)}\sqrt{m_E(|e_j|)m_E(|e_i|)}-\delta_{ji}. \] 
By the linearity of the time evolution, the stationary state  $\Psi_\infty$ with a general inflow 
is described by a linear combination of $\Psi_\infty^{(i)}$'s. 
Then it holds $\bs{\beta}(j)=\sum_{i=1}^r\beta_{ji}\bs{\alpha}(i)$, which implies the desired conclusion. 
\end{proof}
%
\subsection{Proof of Theorem~\ref{thm:main}}
Theorem~\ref{thm:scattering} shows the information on the stationary state at boundaries of the internal graph. 
Here we will explain a property of the stationary state in the interior of the internal graph. 
\begin{proof}
To show $\mathrm{j}(\cdot)$ is the electric current, it is sufficient to show the following Kirchhoff's laws. 
	\begin{enumerate}
        \item Kirchhoff's current law: 
        \begin{equation*}
        \sum_{t(a)=u}\mathrm{j}(a)  =\sum_{o(a)=u}\mathrm{j}(a)=0,\;\; \mathrm{j}(a)+\mathrm{j}(\bar{a})=0;
        \end{equation*}
        \item Kirchhoff's voltage law: 
        \begin{equation*}
            \sum_{k=1}^s \frac{\mathrm{j}(a_k)}{m_E(|a_k|)} = 0\;\; \mathrm{for\;any\;cycle\;}c=(a_1,\dots,a_s).
        \end{equation*}
        \end{enumerate}
First we show the Kirchhoff's current law. 
Note that $\tilde{\rho}_V(u)$ can be expressed by a linear combination of $c_i$ in (\ref{eq:ci}), that is,  
$\tilde{\rho}_V(u) = \sum_{j=1}^r \alpha_j c_j$. 
Then we have 
	\begin{align*} \tilde{\rho}_V(u)
	&= \frac{1}{m_V(u)} \sum_{t(a)=u} \sqrt{m_E(|a|)}\Psi_\infty(a) \\
    &=  \frac{1}{\sqrt{m(\delta G_0)}} \langle  \bs{m}_{\delta E}, \bs{\alpha}_{in} \rangle, \end{align*}
which implies  
	\begin{equation}\label{eq:veretexavetotal1}
        \sum_{t(a)=u} \sqrt{m_E(|a|)}\Psi_\infty(a) 
        	= \sum_{t(a)=u} \frac{m_E(|a|)}{\sqrt{m(\delta G_0)}} \langle  \bs{m}_{\delta E}, \bs{\alpha}_{in} \rangle. 
	\end{equation}
On the other hand, since $\rho_E(|a|)=\sqrt{m_E(|a|)} \tilde{\rho}_V(o(a))$, 
we have 
	\begin{align*} 
	\rho_E(|a|) 
	 &= \frac{1}{2} (\Psi_\infty(a)+\Psi_\infty(\bar{a})) \\
     &=  \frac{\sqrt{m_E(|a|)}}{\sqrt{m(\delta G_0)}} \langle  \bs{m}_{\delta E}, \bs{\alpha}_{in} \rangle, 
    \end{align*}		
which implies  
	\begin{equation}\label{eq:edgeavetotal}
        \sqrt{m_E(|a|)}(\Psi_\infty(a)+\Psi_\infty(\bar{a}))
        	= \frac{2m_E(|a|)}{\sqrt{m(\delta G_0)}} \langle  \bs{m}_{\delta E}, \bs{\alpha}_{in} \rangle.
        \end{equation}
Therefore putting 
	\[ \mathrm{j}(a):=\sqrt{m_E(|a|)}\Psi_\infty(a)-\frac{m_E(|a|)}{\sqrt{m(\delta G_0)}} \langle \bs{m}_{\delta E}, \bs{\alpha}_{in} \rangle, \]
by (\ref{eq:veretexavetotal1}) and (\ref{eq:edgeavetotal}), 
we obtain Kirchhoff's current law of $\mathrm{j}(a)$:
	\begin{equation}
        \sum_{t(a)=u}\mathrm{j}(a)=\sum_{o(a)=u}\mathrm{j}(a)=0,\;\;\mathrm{j}(a)+\mathrm{j}(\bar{a})=0.
        \end{equation}

Now let us see $\mathrm{j}(\cdot)$ also satisfies the Kirchhoff's voltage law using the following lemma. 
\begin{lemma}\label{lem:cycle}
For any cycle $c=(a_1,a_2,\dots,a_s)$, the induced function in $\mathbb{C}^{\tilde{A}}$ is denoted by
	\[ w_{c}(a)=
        \begin{cases} 
        1/\sqrt{m_E(|a_k|)} & \text{: $a=a_k$ $(k=1,\dots,s)$,}\\
        -1/\sqrt{m_E(|a_k|)} & \text{: $a=\bar{a}_k$ $(k=1,\dots,s)$,}\\
        0 & \text{: otherwise.}
         \end{cases} \]
Then $\langle w_c, \Psi_\infty \rangle=0$ holds for any cycle $c$. 
\end{lemma}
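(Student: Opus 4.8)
\textbf{Proof plan for Lemma~\ref{lem:cycle}.}
The plan is to exploit the eigenvalue equation $U\Psi_\infty=\Psi_\infty$ together with the structure of $U$ as a product of local reflections, and to show that $w_c$ lies in a subspace that is orthogonal to every fixed point of $U$. The natural decomposition of the Szegedy walk is $U=SC$, where $C$ is the direct sum over vertices $u$ of the local coins $\mathrm{Sz}([\sqrt{p(a_1)},\dots,\sqrt{p(a_d)}]^\top)$ (acting on arcs with $o(a)=u$), and $S$ is the shift $(\S\Psi)(a)=\Psi(\bar a)$. First I would record that $U\Psi_\infty=\Psi_\infty$ forces $SC\Psi_\infty=\Psi_\infty$, hence $C\Psi_\infty=S\Psi_\infty$; since $S^2=I$ and $C^2=I$, this is the statement that $\Psi_\infty$ is fixed by $U$ exactly when it is simultaneously diagonalized appropriately, but more usefully it says $(C-I)\Psi_\infty=(S-I)\Psi_\infty$. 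I would then pick out the component of $w_c$ and compute $\langle w_c,\Psi_\infty\rangle$ by pairing against either $C$ or $S$.

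The cleaner route, which I would actually carry out, is to show $w_c\perp \mathrm{ran}(U-I)^\perp$'s complement — equivalently $w_c\in\overline{\mathrm{ran}(U-I)}=(\ker(U^*-I))^\perp=(\ker(U-I))^\perp$, using unitarity of $U$. So it suffices to exhibit $w_c=(U-I)\Phi$ for some $\Phi\in\mathbb C^{\tilde A}$ (or at least show $w_c$ is a limit of such, but since $c$ is a finite cycle inside $G_0$, $w_c$ has finite support and I expect an explicit finite-support $\Phi$). Concretely, I would try $\Phi$ supported on the vertices of the cycle: at vertex $u=o(a_k)=t(a_{k-1})$ along the cycle, set the "potential" to be a scalar $\phi(u)$ and let $\Phi(a)=\phi(o(a))\sqrt{p(a)}\,m_V(o(a))^{1/2}$-type expression — i.e. mimic the discrete gradient of a potential that increases by $1/m_E(|a_k|)$ across each edge $a_k$ of the cycle and is locally constant off the cycle. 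The key computation is that applying $C$ (the local reflection $2\bs u\bs u^*-I$ with $\bs u$ the vector of $\sqrt{p(a_i)}$) to a vector of the form $\sqrt{p(a)}\,g(o(a))$ returns $\sqrt{p(a)}\big(2(P g)(o(a))\text{-stuff}\big)$, and applying $S$ flips origin and terminus; the reversibility relation \eqref{eq:DBC} is what makes the telescoping along the cycle close up to $0$, i.e. $\sum_{k=1}^s \mathrm{j}(a_k)/m_E(|a_k|)=\langle w_c,\Psi_\infty\rangle$ and the cycle sum of a gradient vanishes.

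Alternatively, and perhaps most economically, I would bypass the potential altogether: using $\mathrm{j}(a)+\mathrm{j}(\bar a)=0$ and the Kirchhoff current law already proven, $\mathrm{j}$ is a circulation, so it is a linear combination of cycle-indicator flows; and by the eigenvalue equation rewritten as $\rho_E(|a|)=\sqrt{p(a)}\rho_V(o(a))=\sqrt{p(\bar a)}\rho_V(t(a))$ one gets $\Psi_\infty(a)=\rho_E(|a|)/\sqrt{p(a)} \cdot(\text{something})$; substituting into $\langle w_c,\Psi_\infty\rangle=\sum_{k}\big(\Psi_\infty(a_k)-\Psi_\infty(\bar a_k)\big)/\sqrt{m_E(|a_k|)}$ and using \eqref{eq:DBC} to write everything in terms of $\rho_V$ at the cycle vertices, the sum telescopes around the closed cycle and vanishes. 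I would then remark that \eqref{eq:KVL} is literally $\langle w_c,\Psi_\infty\rangle=0$ rewritten via the definition of $\mathrm{j}$, since the extra constant term $\tfrac{m_E(|a|)}{\sqrt{m(\delta G_0)}}\langle\bs m_{\delta E},\bs\alpha_{in}\rangle$ contributes $\sum_k \tfrac{1}{\sqrt{m(\delta G_0)}}\langle\bs m_{\delta E},\bs\alpha_{in}\rangle\big(\tfrac{\sqrt{m_E(|a_k|)}}{\sqrt{m_E(|a_k|)}}-\tfrac{\sqrt{m_E(|a_k|)}}{\sqrt{m_E(|a_k|)}}\big)=0$ around the cycle.

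\textbf{Main obstacle.} The delicate point is the passage "$w_c\in\overline{\mathrm{ran}(U-I)}$ implies $\langle w_c,\Psi_\infty\rangle=0$": $\Psi_\infty$ is not in $\ell^2$, so the abstract orthogonality argument needs care — one cannot simply quote $\ell^2$ Hilbert-space duality. I expect the honest fix is to avoid the functional-analytic phrasing and instead do the telescoping sum directly on the finitely many arcs of the cycle $c$, using only the pointwise identities \eqref{eqo}, \eqref{eqt} and \eqref{eq:DBC2}, which hold for $\Psi_\infty$ regardless of integrability. So the real content is the elementary but slightly fiddly verification that $\sum_{k=1}^s\big(\Psi_\infty(a_k)-\Psi_\infty(\bar a_k)\big)/\sqrt{m_E(|a_k|)}$ telescopes to zero once each difference is expressed through $\rho_V$ at the two endpoints of the edge.
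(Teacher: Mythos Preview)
Your plan has a genuine gap, and it is precisely the point you flagged as the ``main obstacle'' --- but the obstacle is more serious than you seem to realize. Both of your proposed routes rely only on the eigen-equation $U\Psi_\infty=\Psi_\infty$, and that equation \emph{cannot} by itself yield $\langle w_c,\Psi_\infty\rangle=0$. The reason is that $w_c$ is \emph{itself} a fixed point of $U$: a short computation (using reversibility so that $\sqrt{p(a)}=\sqrt{m_E(|a|)/m_V(o(a))}$) shows that at each vertex on the cycle the incoming vector $[w_c(\bar a)]_{o(a)=u}$ is orthogonal to the coin direction $\bs{u}$, hence $C$ acts on it as $-I$, and then $S$ gives back $w_c$. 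Consequently $w_c\in\ker(U-I)$, so your first route --- writing $w_c=(U-I)\Phi$ --- is impossible. And for the same reason your telescoping route cannot succeed: the identities \eqref{eqo}, \eqref{eqt} control only the \emph{symmetric} part $\Psi_\infty(a)+\Psi_\infty(\bar a)=2\rho_E(|a|)$, whereas $\langle w_c,\Psi_\infty\rangle$ depends only on the \emph{antisymmetric} part $\Psi_\infty(a)-\Psi_\infty(\bar a)$, which those identities leave completely free. Concretely, $\Psi_\infty+tw_c$ is also a fixed point of $U$ for every $t\in\mathbb{C}$, and it has the same $\rho_V,\rho_E$ as $\Psi_\infty$; so no amount of rewriting ``in terms of $\rho_V$ at the cycle vertices'' can distinguish them or force the cycle sum to vanish.

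What singles out $\Psi_\infty$ among all fixed points is not the eigen-equation but the way it is produced: it is the limit of the iteration $\psi_{n+1}=E_{PON}\psi_n+\rho$ with $\psi_0=0$. The paper's proof uses exactly this: since $w_c$ has support in $A_0$ and $Uw_c=w_c$, one has $E_{PON}\chi w_c=\chi w_c$, so $\chi w_c$ lies in the centered eigenspace of $E_{PON}$ (eigenvectors with unimodular eigenvalue). A result quoted from \cite{HS} then says that the fixed point of the affine iteration \eqref{eq:master_eq} lies in the orthogonal complement of this centered eigenspace, which gives $\langle w_c,\Psi_\infty\rangle=0$. This extra dynamical ingredient is what your plan is missing; without it the lemma is not provable from $U\Psi_\infty=\Psi_\infty$ alone.
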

\noindent Proof of Lemma~\ref{lem:cycle}: 
It can be checked that $U w_c=w_c$. Then since the support of $w_c$ is included in $A_0$, $E_{PON}\chi w_c=\chi w_c$ holds. 
By \cite{HS}, the centered eigenspace~\cite{R} of $E_{PON}$, whose absolute value of the eigenvalue is $1$, must be orthogonal to the stationary state. 
Therefore since $\Psi_\infty$ is the fixed point of this dynamical system, 
$\Psi_\infty$ must be orthogonal to these eigenvectors. 
The proof of Lemma~\ref{lem:cycle} is completed. $\square$

By Lemma~\ref{lem:cycle}, it holds that
	\begin{align*}
        \langle w_c, \Psi_\infty \rangle=0 & \Leftrightarrow \sum_{k=1}^s \frac{\Psi_\infty(a_k)-\Psi_\infty(\bar{a}_k)}{\sqrt{m_E(|a_k|)}}=0 \\
        & \Leftrightarrow \sum_{k=1}^s \frac{\mathrm{j}(a_k)}{m_E(|a_k|)}-\frac{\mathrm{j}(\bar{a}_k)}{m_E(|a_k|)}=0 \\
        & \Leftrightarrow \sum_{k=1}^s \frac{\mathrm{j}(a_k)}{m_E(|a_k|)}=0
        \end{align*}
Then we obtain Kirchhoff's voltage law, where the capacitance value assigned at edge $e$ in the electric circuit is $m_E(e)$. 
The proof of Theorem~3.2 is completed.
\end{proof}

\subsection{Proof of Theorem~\ref{thm:non-revcase}}
Here we will give some information on the stationary state  of the quantum walk induced by the non-reversible underlying random walk.  
\begin{proof}
Since $P$ is non-reversible in this section, $\rho_V(u)$ must be $0$ for any $u\in \tilde{V}$ by (\ref{eqrev}). 
Then we have $\rho_E(|a|)=\sqrt{p(a)}\rho_V(o(a))=0$. Thus 
	\begin{equation}\label{eq:nonrev1} 
        \sum_{a\in\tilde{A}:t(a)=u}\sqrt{p(\bar{a})}\Psi_\infty(a)=0 
        \end{equation}
and 
	\begin{align}\label{eq:nonrev2} 
        \Psi_\infty(a)+\Psi_\infty(\bar{a}) &= 0. 
         \end{align} 
Remark that since the underlying random walk is not reversible, then we cannot find further deformation of this properties 
such as connecting electric circuit for the reversible case. 
If we take summation over all the arcs whose origins are $u$ in (\ref{eq:nonrev1}) instead of terminus, then the equality does not always hold, that is, 
	\begin{equation*} 
        \sum_{a\in\tilde{A}:o(a)=u}\sqrt{p(\bar{a})}\Psi_\infty(a) \neq 0. 
        \end{equation*}
because of $p(\bar{a})\neq p(a)$ in general. 
For the reversible case, instead of $\sqrt{p(\bar{a})}$, we could apply the measure on the edge $m_E(|a|)$, 
which is invariant for the inverse arc. This is the critical factor to make a difference between the stationary states of quantum walk
induced by reversible and non-reversible random walks. 
By (\ref{eq:nonrev2}), the relation between the out flow $\beta_j$ and $\alpha_j$ 
can be simply connected by $\beta_j=-\alpha_j$. 
Then the input flow is perfectly reflected with the phase flip. 

The stationary state $\Psi_\infty$ is a solution of the linear equation  $(1-\chi^*E_{PON}\chi)\Psi_{\infty}=\chi^*\rho$.  
By \cite{HS}, the centered eigenspace~\cite{R} of $E_{PON}$, whose absolute value of the eigenvalue is $1$, must be orthogonal to the stationary state. This condition implies that 
$\Psi_\infty \in \ker(1-\chi^* E_{PON}\chi)^{\perp}$, 
Then we have completed the proof. 
\end{proof}

\section{Summary and discussion}
In this paper, we studied the Szegedy walk induced by a random walk on the tailed graph. 
If the underlying random walk is reversible (\ref{eq:DBC}), 
the stationary state is a convex combination of an electric current and the stationary measure 
of the reversible random walk in Theorem~\ref{thm:main}. 
We showed the stationary state depends on the reversibility of the underlying random walk in Theorems~\ref{thm:scattering} and \ref{thm:non-revcase}:
the scattering matrix for the reversible case is described by the Szegedy matrix while
the one for the non-reversible case is just a phase flip. 
Then for the non-reversible case, the non-penetration into the interior of the internal graph may occur. 

Now let us discuss how the quantum walker penetrates into the interior of the internal graph. 
For the reversible case, using the property of the reversible measure, 
we obtain the following Proposition as a consequence of Theorem~\ref{thm:main}. 
This result indicates that a quantum walker either penetrates into the whole arcs of the internal graph or no arcs at all.

\begin{proposition}
Let $\Psi_\infty$ be the stationary state and $\psi_\infty=\chi \Psi_\infty$. 
Then the following two statements for the reversible case with $\langle \bs{m}_{\delta E},\bs{\alpha}_{in} \rangle\neq 0$ are equivalent: 
\begin{enumerate}
\item $\exists a\in A_0$, $\psi_\infty(a)\neq 0$ or $\psi_\infty(\bar{a})\neq 0$;
\item $\forall a\in A_0$, $\psi_\infty(a)\neq 0$ or $\psi_\infty(\bar{a})\neq 0$.
\end{enumerate}
\end{proposition}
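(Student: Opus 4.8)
The plan is to read the statement off directly from Theorem~\ref{thm:main}, which already determines $\Psi_\infty$ on every arc of the internal graph as the sum of a current term and a multiple of $\langle \bs{m}_{\delta E},\bs{\alpha}_{in}\rangle$. Since $(2)\Rightarrow(1)$ is immediate as soon as $A_0\neq\emptyset$ (which we may assume; if $A_0=\emptyset$ both statements are vacuous and trivially equivalent), the whole content lies in $(1)\Rightarrow(2)$. In fact I would prove the apparently stronger fact that, under the hypothesis $\langle \bs{m}_{\delta E},\bs{\alpha}_{in}\rangle\neq 0$, statement $(2)$ holds unconditionally; the claimed equivalence then follows formally.

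The computation I would carry out is the following. Solving the defining relation of $\mathrm{j}(\cdot)$ in Theorem~\ref{thm:main} for $\Psi_\infty$ gives, for every $a\in\tilde{A}$,
\[
\Psi_\infty(a)=\frac{\mathrm{j}(a)}{\sqrt{m_E(|a|)}}+\frac{\sqrt{m_E(|a|)}}{\sqrt{m(\delta G_0)}}\,\langle \bs{m}_{\delta E},\bs{\alpha}_{in}\rangle .
\]
Writing this for $a$ and for $\bar a$, using $|\bar a|=|a|$ and Kirchhoff's current law $\mathrm{j}(a)+\mathrm{j}(\bar a)=0$ from Theorem~\ref{thm:main}, the current contributions cancel and I obtain
\[
\Psi_\infty(a)+\Psi_\infty(\bar a)=\frac{2\sqrt{m_E(|a|)}}{\sqrt{m(\delta G_0)}}\,\langle \bs{m}_{\delta E},\bs{\alpha}_{in}\rangle .
\]
Because $m_E(|a|)>0$ in the reversible case and $\langle \bs{m}_{\delta E},\bs{\alpha}_{in}\rangle\neq 0$ by hypothesis, the right-hand side is nonzero, so at least one of $\Psi_\infty(a),\Psi_\infty(\bar a)$ is nonzero. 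Since $\psi_\infty=\chi\Psi_\infty$ coincides with $\Psi_\infty$ on $A_0$, this is precisely statement $(2)$; together with $A_0\neq\emptyset$ it also yields $(1)$, so with the trivial $(2)\Rightarrow(1)$ the two statements are equivalent (indeed both hold).

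I do not expect a genuine obstacle: once Theorem~\ref{thm:main} is in hand the argument is essentially one line. The only points worth a word of care are the degenerate case $A_0=\emptyset$ and, more substantively, an explanation of why the hypothesis $\langle \bs{m}_{\delta E},\bs{\alpha}_{in}\rangle\neq 0$ cannot be dropped: when it vanishes one has $\Psi_\infty(a)=\mathrm{j}(a)/\sqrt{m_E(|a|)}$, so the walker is present on an arc exactly where current flows, and an internal graph in which the current is confined to a proper part (for instance a subgraph joined to the rest through a single cut vertex and carrying no boundary tail, on which the current vanishes identically) breaks the dichotomy. Hence the restriction in the statement is sharp, and nothing beyond Theorem~\ref{thm:main} is needed.
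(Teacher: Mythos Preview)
Your argument is correct and follows precisely the route the paper indicates: the proposition is stated in Section~5 without a written proof, only with the remark that it is ``a consequence of Theorem~\ref{thm:main}'' using ``the property of the reversible measure.'' The identity you derive, $\Psi_\infty(a)+\Psi_\infty(\bar a)=2\sqrt{m_E(|a|)/m(\delta G_0)}\,\langle \bs{m}_{\delta E},\bs{\alpha}_{in}\rangle$, is exactly equation~(\ref{eq:edgeavetotal}) from the proof of Theorem~\ref{thm:main}, and your observation that this forces $\psi_\infty(a)\neq 0$ or $\psi_\infty(\bar a)\neq 0$ for every $a\in A_0$ (so that (2), and hence also (1), actually holds outright) is the intended one-line consequence.
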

An example for $\psi_\infty(a)=0$ for any $a\in A_0$ is as follows. 
Set two tails so that 
a vertex, say $u_*\in V_0$, is connected to the two tails; that is, $t(e_1)=t(e_2)=u_*$. 
Let us consider the initial state $\alpha_1=\sqrt{m_E(|e_2|)}$ and $\alpha_{2}=-\sqrt{m_E(|e_1|)}$, and see $\psi_\infty(a)=0$ for any $a\in A_0$. 
The out source $\rho=\chi U\Psi_0$ to $G_0$ can be computed by 
	\begin{align*} 
	\rho(a) &=\sum_{j=1}^2 2\sqrt{p(a)p(\bar{e}_j)}\\ 
	&= 2 \sqrt{\frac{p(a)}{m_V(o(a))}} \left( m_E(|e_1|)\alpha_1+m_E(|e_2|)\alpha_2 \right)\\
	&=0 
	\end{align*}
for any $a\in A_0$ with $o(a)=u_*$ and then $\rho=0$. 
By (\ref{eq:master_eq}), we have $\psi_n=0$ for any $n\in \mathbb{N}$. 
However eliminating such an exceptional case, 
we see the penetration of quantum walker into {\it all over} the internal graph for the reversible case.

In the next, to see a difference of the penetrations between reversible and non-reversible cases, 
as the internal graph, let us consider the quantum walk 
on a joined graph of the triangle $C_3$ and the finite path $P_k$  of length $k$ ($k\in \mathbb{N}$); $G_0:=C_3*P_k$. 
We add two tails to $G_0$ and  
the initial state is $\alpha_1=1$ and $\alpha_2=0$. 
See Fig.~\ref{fig:one}. 
The moving probability turning clockwise and counterclockwise on $C_3$ be $p$ and $q$ with $1-(p+q)=r>0$; and 
the probability to escape $C_3$ is $r$; the moving probability on every vertex of $P_k$ is $1/2$ except the boundaries. 
See Fig.~\ref{fig:two}. 
We label the arcs of $C_3$ as is depicted in Fig.~\ref{fig:three}. 
\begin{figure}[htbp]
  \begin{center}
   \includegraphics[width=52mm]{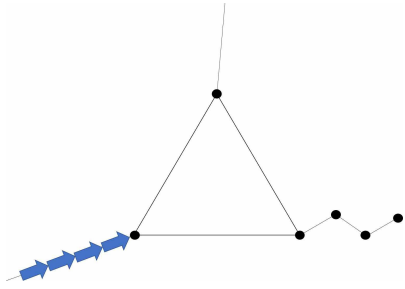}
  \end{center}
  \caption{Initial state}
  \label{fig:one}
\end{figure}
\begin{figure}[htbp]
 \begin{center}
  \includegraphics[width=52mm]{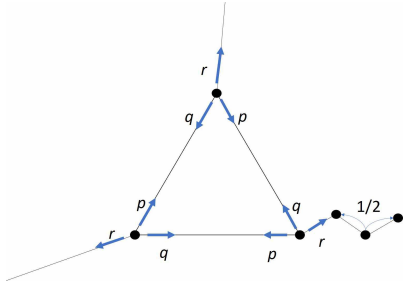}
 \end{center}
  \caption{The underlying random walk}
  \label{fig:two}
\end{figure}
\begin{figure}[htbp]
 \begin{center}
  \includegraphics[width=52mm]{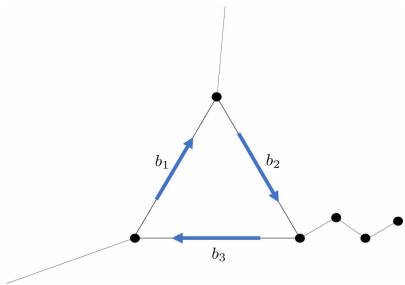}
 \end{center}
  \caption{Labelling of arcs}
  \label{fig:three}
\end{figure}
Remark that $p=q$ if and only if the random walk is reversible (\ref{eq:DBC}), in particular, if $p=q=1/3$, then the induced quantum walk is the Grover walk. 
Now let us consider the cases of $p= q$ and $p\neq q$, respectively. 
\begin{enumerate}
\item $p=q$ case (reversible case) \\
By Theorem~\ref{thm:main}, the stationary state is described by 
	\[ \Psi_\infty(a)=\frac{1}{\sqrt{m_E(|a|)}}\mathrm{j}(a)+\sqrt{\frac{m_E(|a|)}{m(\delta G_0)}}\langle \bs{m}_{\delta E},\bs{\alpha}_{in} \rangle. \]
Because of $p=q=(1-r)/2$, then the conductances are proportional to  
	\[ m_E(e) = \begin{cases} 1-r & \text{: $e\in E(C_3)$; }\\  2r  & \text{: otherwise}. \end{cases} \] 
Then $m(\delta G_0)=2r +2r =4r$, and $\bs{m}_{\delta E}=[\sqrt{1/2},\sqrt{1/2}]^\top$, which implies $\langle \bs{m}_{\delta E},\bs{\alpha}_{in} \rangle=1/\sqrt{2}$. 
Therefore the stationary state in this case is reduced to
	\begin{equation}\label{eq:a} 
        \Psi_\infty(a)=\frac{1}{\sqrt{m_E(|a|)}}\mathrm{j}(a)+\sqrt{m_E(|a|)}\frac{1}{\sqrt{8r}}. 
        \end{equation}
Then we need to compute the following current flow on the graph. 

By (\ref{eq:a}), the value of the current flow from the outside into the internal graph; $I$, is 
	\[ I=\sqrt{2r}\cdot 1 -\frac{2r}{\sqrt{4r}}\sqrt{\frac{2r}{4r}}=\sqrt{\frac{r}{2}}. \]
Note that by the Kirchhoff's current law, the current does not exist on $P_3$. 
Then putting $\mathrm{j}(b_1)=I_1$, we have $\mathrm{j}(\bar{b_2})=\mathrm{j}(\bar{b_3})=I-I_1$. 
Moreover putting $R=(1-r)^{-1}$, by Kirchhoff's voltage law, we have $I_1R=(I-I_1)R+(I-I_1)R$ which implies $I_1=(2/3)I$.
Thus the stationary state restricted to the internal graph for the reversible case is described by 
	\begin{align*}
        \psi_\infty(b_1) &= \frac{3+r}{6 \sqrt{2} \sqrt{r(1-r)}}; \;\; \psi_\infty(\bar{b}_1) = \frac{3-7 r}{6 \sqrt{2} \sqrt{r(1-r)}}; \\
        \psi_\infty(b_2) &= \frac{3-5 r}{6 \sqrt{2} \sqrt{r(1-r)}};\;\;  \psi_\infty(\bar{b}_2) = \frac{3-r}{6 \sqrt{2} \sqrt{r(1-r)}}; \\
        \psi_\infty(b_3) &= \frac{3-5 r}{6 \sqrt{2} \sqrt{r(1-r)}};\;\;  \psi_\infty(\bar{b}_3) = \frac{3-r}{6 \sqrt{2} \sqrt{r(1-r)}}; \\
        \psi_\infty(b) &= 1/2 \;\;\mathrm{for\;any}\;b\in A(P_k).
        \end{align*}
Here we remark that the value of $\psi_\infty$ for each arc does not depend on the constant multiple of the revesible measure. 
In particular, if $r=1/3$ which induces the Grover walk case, then 
	\begin{align*} 
        \psi_\infty(b_1) &= 5/6; \;\;  \psi_\infty(\bar{b}_1) = 1/6; \\
        \psi_\infty(b_2) &= 1/3; \;\; \psi_\infty(\bar{b}_2) = 2/3; \\
        \psi_\infty(b_3) &= 1/3; \;\; \psi_\infty(\bar{b}_3) = 2/3; \\
        \psi_\infty(b) &= 1/2 \;\;\mathrm{for\;any}\;b\in A(P_k). 
        \end{align*} 
Moreover by Theorem~\ref{thm:scattering}, since the scattering matrix in this case is reduced to 
	\[ \mathrm{Sz}(\bs{m}_{\delta E})=\begin{bmatrix} 0 & 1 \\ 1 & 0 \end{bmatrix}, \]
then the perfectly transmitting occurs. 
\item $p \neq q$ case (non-reversible) \\
Focusing on $t(b_1)$, by Theorem~\ref{thm:non-revcase}, we have 
	\[ \sqrt{q}\Psi_\infty(b_1)+\sqrt{p}\Psi_\infty(\bar{b}_2)=0. \]
Putting $\Psi_\infty(b_1)=:\beta$, we have $\Psi_\infty(\bar{b}_2)=-(\sqrt{q/p})\beta$. 
By Theorem~\ref{thm:non-revcase}, we have $\Psi_\infty(b_2)=(\sqrt{q/p})\beta$. 
Let us see $\Psi_\infty(a)=0$ for all $a\in A(P_k)=\{e_1,\dots,e_k,\bar{e}_1,\dots,\bar{e}_k\}$ in the following, where
$t(b_2)=o(e_1),t(e_1)=o(e_2),\dots, t(e_{k-1})=o(e_k)$. 
Focusing on the leaf of the path; that is, $t(e_k)$, by Theorem~\ref{thm:non-revcase}, we have $\Psi_\infty(e_k) \times 1=0$.
Thus $\Psi_\infty(\bar{e}_k)=-\Psi_\infty(e_k)=0$.  
Using this argument recursively, we obtain $\Psi_\infty(e_{k-1})=\Psi_\infty(\bar{e}_{k-1})=0,\dots, \Psi_\infty(e_{k-1})=\Psi_\infty(\bar{e}_{k-1})=0$. 
Next, focusing on the right neighbor of $t(b_1)$; $t(b_2)$, by Theorem~\ref{thm:non-revcase} again, we have 
	\[ \sqrt{q}\Psi_\infty(b_2)+\sqrt{p}\Psi_\infty(\bar{b}_3)+\sqrt{r}\Psi_\infty(\bar{e}_1)=0. \]
Since $\Psi_\infty(\bar{e}_1)=0$, we have $\Psi_\infty(b_3)=(\sqrt{q^2/p^2})\beta$. 
Finally, focusing on the right neighbor of $t(b_2)$; $t(b_3)$, by Theorem~\ref{thm:non-revcase}, we have 
	\[ \sqrt{q}\Psi_\infty(b_3)-\sqrt{p}\beta+1\cdot \sqrt{r}=0. \]
Then we obtain 
	\[ \beta=\frac{\sqrt{p^2r}}{p^{3/2}-q^{3/2}}. \]
Thus the stationary state restricted to the internal graph for the non-reversible case is described by 
	\begin{align*}
        \psi_\infty(b_1) &= -\psi_\infty(\bar{b}_1)=\frac{\sqrt{rp^2}}{p^{3/2}-q^{3/2}}; \\
        \psi_\infty(b_2) &= -\psi_\infty(\bar{b}_2)=\frac{\sqrt{rpq}}{p^{3/2}-q^{3/2}}; \\
        \psi_\infty(b_3) &= -\psi_\infty(\bar{b}_3)=\frac{\sqrt{rq^2}}{p^{3/2}-q^{3/2}}; \\
        \psi_\infty(b) &= 0 \;\;\mathrm{for\;any}\;b\in A(P_k).
        \end{align*}
Therefore the quantum walker {\it partially} penetrates into the internal graph; this is due to the existence of a leaf. 
\end{enumerate}
The perfect reflection occurs for the non-reversible case by Theorem~\ref{thm:non-revcase}, while the perfectly transmitting occurs for the reversible case. 

Let $M(G_0):=\sum_{t(a)\in V_0}|\Psi_\infty(a)|^2$ be the mass of the internal graph from the view point of our quantum walk. 
We summarize the stationary state on $C_3*P_k$ for each case in the following table. 
\begin{center}
\begin{tabular}{r|c|c|l}
 & $\supp(\psi_\infty)$ & $M(G_0)$ & \qquad\quad Scattering \\ \hline
Rev. $(p=q)$ & $A_0$ & $M_{rev}$ & perfect transmission \\ 
N-Rev. $(p\neq q)$ & $A(C_3)\subset A_0$ & $M_{nonrev}$ & perfect reflection  
\end{tabular}
\end{center}
\noindent \\
Here 
	\begin{equation}\label{eq:revcase} 
        M_{rev}=\left(-\frac{17}{12}+\frac{2}{3(1-r)}+\frac{3}{4r}+\frac{k}{2}\right)+1
        \end{equation}
and 
	\begin{equation}\label{eq:nonrevcase} 
        M_{nonrev}=\frac{2r(p^2+pq+q^2)}{(p^{3/2}-q^{3/2})^2}+1, 
        \end{equation}
in particular, if $p=q=r=1/3$ corresponding to the Grover walk case,  
	\[ M(G_0)=M_{rev}=11/6 + k/2+1.\]
For the function $M_{rev}(r)$ on $0\leq r\leq 1$, 
we can observe that $M_{rev}(0)=M_{rev}(1)=0$ by the definition of Szegedy walk, while 
$\lim_{r\downarrow 0} M_{rev}(r)=\lim_{r\uparrow 1} M_{rev}(r)=\infty$. 
On the other hand, let  us regard $M_{nonrev}$ as the function $M_{nonrev}(\epsilon;r)$ of $\epsilon=|p-q|$ for a fixed $0<r<1$. 
Then, if $\epsilon=0$,  $M_{nonrev}$ corresponds to the reversible case. However, interestingly, 
by (\ref{eq:revcase}) and (\ref{eq:nonrevcase}), 
$M_{nonrev}$ is not continuously accumulated to $M_{rev}(r)<\infty$ as $\epsilon\downarrow 0$; in fact, $\lim_{\epsilon \downarrow 0}M_{nonrev}(\epsilon;r)=\infty$. 
To study a quantum walk induced by a non-reversible random walk is one of the interesting future's problem since 
this study investigates such a ``phase transition" of the quantum walk in more detail. 



\noindent \\
\noindent \\
\noindent {\bf Acknowledgments}
The authors would like to express their sincere gratitude to Professor Hiroshi Ogura for his valuable suggestions.
YuH's work was supported in part by Japan Society for the Promotion of Science Grant-in-Aid for Scientific Research 
(C)~25400208, (C)~18K03401 and (A)~15H02055.
E.S. acknowledges financial supports from Japan Society for the Promotion of Science Grant-in-Aid for Scientific Research (C) 19K03616, and Research Origin for Dressed Photon.



\begin{small}
\bibliographystyle{jplain}

\end{small}

\end{document}